\documentclass[
    aps,
    prxquantum,
    reprint,
    superscriptaddress
]{revtex4-2}

\usepackage[T1]{fontenc}
\usepackage{lmodern}
\usepackage{setspace}

\usepackage{amsmath,amssymb}
\usepackage{amsthm}
\usepackage{bm}

\usepackage{booktabs}
\usepackage{dcolumn}
\usepackage[colorlinks=true,allcolors=blue]{hyperref}
\usepackage{enumitem}
\usepackage{xcolor}
\usepackage{cancel}

\usepackage[pdftex]{graphicx}
\usepackage{comment}
\usepackage{float}

\newtheorem{theorem}{Theorem}
\newtheorem{lemma}[theorem]{Lemma}

\theoremstyle{definition}
\newtheorem{definition}[theorem]{Definition}

\newtheorem{example}[theorem]{Example}
\newtheorem{remark}[theorem]{Remark}

\newcommand{\ket}[1]{{| {#1} \rangle}}

\begin{document}
\title{Cyclotomic Structure and Exact Recurrences in Floquet Many-Body Quantum Dynamics}

\author{Amit Anand}
\email{a63anand@uwaterloo.ca}
\affiliation{Institute for Quantum Computing, University of Waterloo, Waterloo, Ontario, Canada N2L 3G1}
\affiliation{Department of Physics and Astronomy, University of Waterloo, Waterloo, Ontario, Canada N2L 3G1}

\author{Dinesh Valluri} 
\affiliation{Institute for Quantum Computing, University of Waterloo, Waterloo, Ontario, Canada N2L 3G1}

\author{Shohini Ghose}
\affiliation{Institute for Quantum Computing, University of Waterloo, Waterloo, Ontario, Canada N2L 3G1}
\affiliation{Department of Physics and Astronomy, University of Waterloo, Waterloo, Ontario, Canada N2L 3G1}
\affiliation{Department of Physics and Computer Science, Wilfrid Laurier University, Waterloo, Ontario, Canada N2L 3C5}
\affiliation{Perimeter Institute for Theoretical Physics, 31 Caroline St N, Waterloo, Ontario, Canada N2L 2Y5}


\begin{abstract}

Periodically driven quantum systems exhibit a wide range of complex dynamics, including chaos. Yet, at special parameter values, their evolution can recur exactly. Several quantum protocols have recently been proposed that exploit exact Floquet recurrences to demonstrate genuine quantum advantage. Such recurrences can be studied through the arithmetic structure of the Floquet unitary. In this work, we establish a necessary and sufficient arithmetic criterion for exact recurrence and determine the minimal recurrence period when one exists. Because recurrence is a global property of the unitary, it holds for every initial state. Applying our framework requires only knowledge of the number field associated with the Hamiltonian parameters. We demonstrate the method using the kicked Ising chain, establishing exact recurrences in both Clifford and non-Clifford regimes. We further show that, in the recurrent non-Clifford regime, magic—a resource for quantum computation—is generated at intermediate times and undergoes exact periodic revivals enforced by the recurrence of the full Floquet unitary. By connecting number theory, quantum many-body dynamics, and quantum information, our work shows that exact recurrence extends beyond Clifford circuits and opens new avenues for magic generation and quantum hardware benchmarking.

\end{abstract}

\maketitle

\section{Introduction}

Periodically driven non-equilibrium quantum systems can generate complex many-body dynamics~\cite{Shirley1965}. Yet at special parameter values their stroboscopic evolution may close exactly after a finite number of drive cycles ~\cite{anand_davis_ghose_2024}. Such recurrences therefore define exceptional points in parameter space where the entire quasienergy spectrum is commensurate~\cite{anand2026arithmetic}, sharply contrasting with generic Floquet dynamics~\cite{dalessio_rigol_2014}, while still permitting complex intermediate-time behaviour~\cite{sommers2023crystalline}. Without access to the full spectrum of the one-period evolution operator, determining whether such a recurrence exists is difficult. A direct search over successive drive cycles can also be computationally expensive and does not provide a clear way to rule out recurrence. Because stroboscopic dynamics is generated by repeated applications of a single one-period operator, the recurrence problem acquires an arithmetic structure when that operator is defined over a number field~\cite{anand2026arithmetic}. Understanding this structure is therefore important for explaining why exact closure occurs only at special parameter values and what distinguishes these points from generic Floquet dynamics.

Over the past several decades, periodically driven non-equilibrium quantum systems have emerged as a major setting for investigating quantum dynamics. They have been studied from diverse perspectives, including quantum chaos and dynamical localization~\cite{Izrailev_Shepelyanskii_1980,fishman_rotor_anderson_1982,wang_ghose_entg_2004,Fortin_Lombardi_2019}, thermalization~\cite{kaufman_2016_quantum,bukov_many_body_2021}, and quantum--classical correspondence 
~\cite{Kumari_Ghose_vicinity_2018}. Periodic driving also supports a broad range of many-body phenomena, including quantum many-body scars~\cite{Pai_and_Pretko_scar_2019,Sen_scar_2020,Norio_scars_2020}, discrete time crystals and Floquet topological phases~\cite{roy_2017_floquet,sacha_2017_time,khemani_2019_a}.

Floquet theory provides the natural framework for studying these systems~\cite{gfloquet_1883}. For a time-periodic Hamiltonian satisfying $H(t+T)=H(t),$ the evolution over one drive period is described by
\begin{equation}
    U_F= \mathcal{T} \exp\!\left[ -i\int_{t_0}^{t_0+T} H(t)\,dt\right],
\end{equation}
where \(\mathcal{T}\) denotes time ordering and we set \(\hbar=1\)~\cite{grifoni_driven_1998}. The stroboscopic evolution after \(n\) periods is then generated by $n$ applications of the same one-time period unitary, \(U_F^n\). The ability to implement and repeatedly apply a fixed driving cycle makes Floquet systems particularly attractive for controlled experiments, and substantial progress has been made in realizing Floquet dynamics across a variety of quantum platforms~\cite{Chaudhury_2009,Neill_2016, kanem_2007_observation,Krithika_NMR_QKT_2019,Lignier2007,Jotzu2014,Mi2022TimeCrystal,Zhang2022FloquetSPT,Kim2023Ising,liao2023simulation,tindall2024efficient}.

The one-period Floquet operator incorporates the explicit time dependence over a complete drive cycle, reducing the evolution at integer multiples of the driving period to repeated applications of a fixed unitary operator~\cite{hogg_1982_recurrence,Izrailev_Shepelyanskii_1980,pandit_2022_bounds,oszmaniec_2024_saturation,anand_davis_ghose_2024,anand2026arithmetic}. Representing the stroboscopic dynamics through successive powers of this fixed operator provides a direct way to analyze evolution over many drive cycles, with quantum recurrence emerging as a natural question. Approximate quantum recurrence is the quantum analogue of the Poincaré recurrence theorem, which has long been established in classical dynamics~\cite{poincar_1890_avantpropos}. In the quantum realm, recurrence is understood in an approximate sense: under unitary evolution, a quantum state eventually returns arbitrarily close to its initial state~\cite{bocchieri_1957_quantum,hogg_1982_recurrence,Chaitanya2026recurrence}. Although this state-dependent quantum recurrence theorem is commonly formulated for systems governed by time-independent Hamiltonians, periodically driven systems admit an analogous stroboscopic description. The timescales of these approximate recurrences depend strongly on the dynamical structure of the system, typically scaling exponentially with system size in integrable many-body systems and doubly exponentially in generic non-integrable systems~\cite{gimeno_upper_2017,suskind_complexity_2018,freedman_quantumrecurrent_2024,levesque_rec_time_2025,ropotenko_the_2025}.

Beyond approximate returns of individual states, periodically driven systems can exhibit exact recurrence at special parameter values ~\cite{anand_davis_ghose_2024,bhosle_qkt__jan_2024,Purohit2026DoubleKickedTop, vonKeyserlingk2018Operator}. At these points, every initial state returns after the same finite number of drive cycles, up to a global phase. The evolution between returns can still generate entanglement or scramble information, while the exact revival provides a predictable time to recover the initial state. These features have motivated Floquet proposals for quantum metrology ~\cite{zou_2025_enhancing,biswas_2025_the} and secure quantum communication and key distribution~\cite{SreeramSanthanam2026}. Related work has used the analytically tractable dynamics of a kicked-Ising circuit to design a quantum battery~\cite{Romero2026Battery}.

 At an exact recurrence, the Floquet unitary has finite period,
\begin{equation}
    U_F^n=\tau I, \qquad \tau\in U(1).
\end{equation}
Every initial state then returns up to the same global phase, and every observable returns exactly. This state-independent condition is stronger than the approximate return of a particular state. In our previous work~\cite{anand2026arithmetic}, we showed that if the entries of a \(d\)-dimensional Floquet unitary lie in a number field \(K\), its minimal recurrence period, $n$ must satisfy
\begin{equation}
    \varphi(n)\mid [K:\mathbb{Q}]d!.
\end{equation}
where $\varphi(n)$ is an Eulier-totient function. This restricts possible periods to a finite set, but that set can be large, making separate verification of each candidate demanding.

Building on the arithmetic framework introduced in ~\cite{anand2026arithmetic}, we develop a prime-resolved procedure that determines whether a finite-dimensional Floquet unitary has an exact recurrence up to a global phase. The resulting criterion is necessary and sufficient and applies regardless of whether the dynamics are integrable or nonintegrable, or whether the unitary is Clifford or non-Clifford. We assume that the entries of the one-period unitary lie in a number field. This includes suitable parameter regimes of models such as the quantum kicked top~\cite{haake_1987}, kicked Ising chains\cite{ho2022exact}, and kicked central spin systems~\cite{biswas_2025_the}. Choosing drive parameters as rational multiples of \(\pi\) can yield algebraic unitary entries in these models, but it does not, by itself, guarantee exact recurrence.

The spectral condition for recurrence is simple to state. If the eigenvalues are \(\lambda_k=e^{i\theta_k}\), an exact recurrence exists if and only if every ratio \(\lambda_k/\lambda_1\) is a $n-th$ root of unity, or equivalently, every phase difference \(\theta_k-\theta_1\) is a rational multiple of \(2\pi\). Applying this condition directly is difficult when the spectrum is not known exactly, as is often the case for interacting Floquet systems. This motivates a criterion based on the arithmetic data of the one-period unitary rather than explicit eigenvalue expressions.

The central idea is to avoid searching directly over successive powers of \(U\). Instead, we use the theory of ramification of the cyclotomic field extensions to determine which prime factors and prime powers can possibly occur in a recurrence period. Exact recurrence requires every distinct eigen-phase to realign after the same number of drive cycles. 
If the minimal recurrence period is $n=\prod_{i=1}^{k} p_i^{e_i}$, ratios of these eigenphase must be $n-$th roots of unity. We give a method to find the prime factors $p_i$ and the indices $k_i$ by studying the discriminant of the characteristic polynomial of the unitary matrix. For the discriminant-based analysis to apply, the relevant polynomial must have a nonzero discriminant. We therefore replace the characteristic polynomial by its square-free part, thereby removing repeated roots arising from eigenvalue degeneracies while retaining every distinct eigenvalue. Once the admissible primes \(p_i\) and their maximal exponents \(e_i\) have been identified, we obtain a necessary and sufficient criterion for exact recurrence and determine the minimal recurrence period.

We apply this framework to the kicked Ising chain, a many-body Floquet model used to study quantum chaos, entanglement growth, and thermalization~\cite{Prosen1998Integrable,Prosen2000KickedIsing,bertini2019entanglement,akila2016particle,bertini2018exact,bertini2019entanglement,gopalakrishnan2019unitary}.
At special parameter values, its dynamics can also produce exact projected quantum-state designs~\cite{ho2022exact}. Kicked Ising
circuits have served as benchmarks for quantum processors and classical simulation methods~\cite{kim2023evidence,tindall2024efficient,begusic2024fast,liao2023simulation}, and their charging dynamics have been explored in a quantum battery ~\cite{Romero2026Battery}. The model's Clifford and non-Clifford regimes provide a useful comparison where Clifford evolution preserves stabilizer states and, on a finite qubit system, always recurs up to a global phase~\cite{vonKeyserlingk2018Operator,sommers2023crystalline}, whereas non-Clifford evolution can generate magic~\cite{tirrito2024anticoncentration,jha2026magic}. We take the model parameters to be rational multiples of \(\pi\), so that the Floquet operator satisfies our number-field assumption, and illustrate the arithmetic procedure for an explicit parameter set. We then study exact recurrence in both regimes. For a Clifford family, we find that the recurrence period varies non-monotonically with system size. We also show that exactly recurrent non-Clifford dynamics can generate magic at intermediate times and analytically establish recurrence for an infinite family of non-Clifford parameter choices. Thus, the class of exactly recurrent Floquet unitaries extends beyond the Clifford class, even though every Clifford unitary on a finite qubit system has finite order up to a global phase.

The paper is organized as follows. Section~\ref{sec:arithmetic cretrion for exact recurrence} develops the arithmetic criterion for exact recurrence. We first constrain the primes that can divide a recurrence period, then construct a universal exponent and use polynomial remainders to determine whether the Floquet unitary is periodic and, if so, find its minimal period. Section~\ref{sec:algorithm} we present an algorithm that can be used independently to find the minimal period of the unitary if it exist. Section~\ref{sec:recurrence in kicked ising model} applies this method to kicked Ising chains. We examine how recurrence periods depend on system size and parameters, compare Clifford and non-Clifford regimes, and study magic generation during recurrent evolution. Section~\ref{sec:discussions} summarizes the results and discusses what they reveal about the relationship between exact recurrence and circuit structure.

\section{An Arithmetic Criterion for Exact Recurrence of Floquet Unitaries over Number Fields} \label{sec:arithmetic cretrion for exact recurrence}

The purpose of the method is to derive an exact criterion that decides whether a given unitary matrix is periodic or non-periodic. The arithmetic information associated with \(U\) is used to construct a universal exponent \(E\) that is divisible by every possible minimal period. Periodicity is then decided by the equivalence
\begin{equation}
    U\ \text{is periodic}
    \quad\Longleftrightarrow\quad
    U^E\ \text{is a scalar matrix}.
\end{equation}
Thus, we provide a necessary and sufficient condition that proves periodicity, whereas a failure of this condition provides an exact certificate of non-periodicity.

Let \(U\) be a unitary matrix describing the time evolution generated by a time-periodic Hamiltonian \(H(t)=H(t+T)\), where \(T\) is the period. Let the eigenvalues of \(U\) be \(\lambda_1,\lambda_2,\ldots,\lambda_d\), where \(d\) is the dimension of \(U\). We assume that the entries of \(U\) lie in the number field \(K=\mathbb{Q}(\zeta_m)\), where \(m\) depends on the parameters of the Hamiltonian \(H(t)\).

Since \(U\) is unitary, it is diagonalizable. Therefore, in an eigenbasis, we may write
\begin{equation}
\label{eq:unitary-diagonal-form}
U =
\begin{pmatrix}
\lambda_1 & 0 & \cdots & 0 \\
0 & \lambda_2 & \cdots & 0 \\
\vdots & \vdots & \ddots & \vdots \\
0 & 0 & \cdots & \lambda_d
\end{pmatrix}_{d\times d}.
\end{equation}
A recurrence \(U^n=\tau I\) requires every eigenvalue of \(U\) to acquire the same phase after \(n\) periods. Hence, the ratios \(\lambda_i/\lambda_j\) must be \(n\)th roots of unity. The characteristic polynomial, $f_U(x) = \det(xI-U)$, encodes all eigenvalues and its coefficients are in the field $K=\mathbb{Q}(\zeta_m)$.
The characteristic polynomial of \(U\) can also be written as
\begin{equation}
\label{eq:char-poly-distinct-roots}
f_U(x) = (x-\lambda_1)(x-\lambda_2)\cdots(x-\lambda_d),
\end{equation}
where the eigenvalues are counted with algebraic multiplicity. The discriminant of this polynomial is
\begin{equation}
\label{eq:disc-char-poly}
\operatorname{Disc}(f_U) = \prod_{1\leq i<j\leq d}(\lambda_i-\lambda_j)^2 .
\end{equation}
The discriminant detects whether the polynomial has repeated roots:
\begin{equation}
\label{eq:disc-zero-repeated-root}
\operatorname{Disc}(f_U)=0
\end{equation}
if and only if \(\lambda_i=\lambda_j\) for some \(i\neq j\). Equivalently, \(\operatorname{Disc}(f_U)\neq 0\) if and only if \(f_U(x)\) has no repeated roots. 

However, spectral degeneracies do not provide additional information about the recurrence because identical eigenvalues accumulate identical phases. We therefore replace the characteristic polynomial by its square-free part \(f_{\mathrm{sf}}\), which contains each distinct eigenvalue exactly once. To remove repeated roots, we compute the square-free part of \(f_U(x)\), whose roots are precisely the distinct roots of \(f_U(x)\).

The derivative of \(f_U(x)\) is given by
\begin{equation}
\label{eq:derivative-symbolic}
f_U'(x) = \sum_{i=1}^d \prod_{j\neq i}(x-\lambda_j).
\end{equation}
Therefore,
\begin{equation}
\label{eq:derivative-at-root}
f_U'(\lambda_i) = \prod_{j\neq i}(\lambda_i-\lambda_j).
\end{equation}
Thus, \(f_U'(\lambda_i)=0\) if and only if \(\lambda_i\) is a repeated root of
\(f_U(x)\).

Now consider the greatest common divisor of \(f_U(x)\) and \(f_U'(x)\):
\begin{equation}
\label{eq:gcd-char-derivative}
\gcd\!\left(f_U(x),f_U'(x)\right)\in K[x].
\end{equation}
This greatest common divisor detects the repeated roots. More precisely, by the above observation, \(\lambda_i\) is a repeated root of \(f_U(x)\) if and only if it is a root of \(\gcd\!\left(f_U(x),f_U'(x)\right)\). Dividing by this repeated
part gives the square-free part of \(f_U(x)\):
\begin{equation}
\label{eq:square-free-polynomial-definition}
f_{\mathrm{sf}}(x) = \frac{f_U(x)}{\gcd\!\left(f_U(x),f_U'(x)\right)} \in K[x].
\end{equation}
Thus, the square-free polynomial has the form
\begin{equation}
\label{eq:square-free-polynomial}
f_{\mathrm{sf}}(x) = (x-\lambda_{i_1})(x-\lambda_{i_2})\cdots(x-\lambda_{i_k}),
\end{equation}
where \(\lambda_{i_1},\lambda_{i_2},\ldots,\lambda_{i_k}\) are the distinct
eigenvalues of \(U\). Let $L=K(\lambda_{i_1},\lambda_{i_2}...\lambda_{i_k})$ be the splitting field of \(f_U\) over \(K\).

\begin{remark}
The degree of the square free polynomial $f_{\mathrm{sf}}(x)$ is exactly the number of distinct roots of $f_U(x)$. If $U$ is periodic with the period $n$, then the number of distinct roots of $f_U(x)$ is at most $n$. Therefore $deg(f_{sf}) \leq n$.
\end{remark}

\begin{remark}
The splitting field of the original polynomial and the splitting field of its square-free part are the same, $L(f_U)=L(f_{\mathrm{sf}}).$
\end{remark}

\subsection{Ramification in a tower of number fields}
In our previous work~\cite{anand2026arithmetic}, we have shown that if $U$ is a unitary matrix with coefficients in a number field $K$ is periodic with period $n$, then $\mathbb{Q}(\zeta_n) \subseteq L$, where $L$ is the splitting field of $f_U$ or equivalently of $f_{\mathrm{sf}}$. The unitary we are dealing with has following properties, $L=K\bigl(\zeta_n,\tau^{1/n}\bigr)$ where $K$ is the base field of the unitary $U$. This establish the left hand side of the filed hierarchy diagram given in \ref{fig:field-inclusions}. The right hand side of the \ref{fig:field-inclusions}, contains  $K=\mathbb{Q}(\zeta_m)$ where $m$ depends on the Hamiltonian's parameters and thus is constant for a given set of parameters. 
\begin{figure}[H]
\centering
\[
\begin{array}{ccccc}
&& L=K\bigl(\zeta_n,\tau^{1/n}\bigr) && \\[1ex]
& \nearrow && \nwarrow & \\[-0.5ex]
\mathbb{Q}(\zeta_n) &&&& K=\mathbb{Q}(\zeta_m) \\[1ex]
& \nwarrow && \nearrow & \\[-0.5ex]
&& \mathbb{Q} &&
\end{array}
\]
\caption{Field inclusions associated with exact recurrence.}
\label{fig:field-inclusions}
\end{figure}

Using the relations between the different number fields, we can find all the possible prime factors of $n$ as follows. Lets assume that $n$ has a prime factorization such as $n= p_1^{e_1}....p_k^{e_k}$.  
Let \(p\) be one of the odd primes $p_1, \ldots ,p_k$. 
If the prime \(p\) divides the minimal period $n$, then it exhibits exceptional arithmetic behavior in $\mathbb{Q}(\zeta_n)$, known as \textit{ramification}, see Appendix~\ref{app:cyclotomic-discriminant} . Since the  $\mathbb{Q}(\zeta_n) \subseteq L$, all prime factors of $n$ are also ramified in $L=K\bigl(\zeta_n,\tau^{1/n}\bigr)$. 

Therefore, using the left hand side above diagram, we concluded that the rational primes dividing the period $n$ are ramified in $L$. Now, using the right hand side of the above diagram, we can classify the rational primes that are ramified in $L$ in the following ways,
\begin{lemma}\label{lem:ramification of primes}
   If \(p\) ramifies in \(L/\mathbb{Q}\), then \\ (a) either \(p\) ramifies in
\(K/\mathbb{Q}\), or \\ (b) there exists a prime ideal
\(\mathfrak{p}\subseteq\mathcal{O}_K\) lying above \(p\) that ramifies
in \(L/K\). 
\end{lemma}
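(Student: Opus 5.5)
We will prove the statement by contraposition, using multiplicativity of ramification indices in the tower $\mathbb{Q}\subseteq K\subseteq L$. Assume that $p$ is unramified in $K/\mathbb{Q}$, and that every prime ideal $\mathfrak{p}\subseteq\mathcal{O}_K$ above $p$ is unramified in $L/K$. We will show that $p$ is then unramified in $L/\mathbb{Q}$, which contradicts the hypothesis.

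The key steps are as follows. First, we fix an arbitrary prime $\mathfrak{P}\subseteq\mathcal{O}_L$ lying above $p$. Setting $\mathfrak{p}=\mathfrak{P}\cap\mathcal{O}_K$, we observe that $\mathfrak{p}$ is a prime of $\mathcal{O}_K$ with $\mathfrak{p}\cap\mathbb{Z}=p\mathbb{Z}$, so $\mathfrak{p}$ lies above $p$.

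Second, we establish the tower formula
\begin{equation}
e(\mathfrak{P}\mid p)=e(\mathfrak{P}\mid \mathfrak{p})\,e(\mathfrak{p}\mid p).
\end{equation}
This follows from unique factorization of ideals in the Dedekind domains $\mathcal{O}_K$ and $\mathcal{O}_L$. Write $p\mathcal{O}_K=\prod_i \mathfrak{p}_i^{e(\mathfrak{p}_i\mid p)}$ and extend to $\mathcal{O}_L$, using $\mathfrak{p}_i\mathcal{O}_L=\prod_j \mathfrak{P}_{ij}^{e(\mathfrak{P}_{ij}\mid\mathfrak{p}_i)}$. Distinct $\mathfrak{p}_i$ have no common prime factors upstairs, because each $\mathfrak{P}$ contracts to a unique $\mathfrak{p}$. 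Hence the exponent of $\mathfrak{P}$ in $p\mathcal{O}_L$ is exactly the stated product.

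Third, we apply the two assumptions. Since $p$ is unramified in $K$, we have $e(\mathfrak{p}\mid p)=1$. Since $\mathfrak{p}$ is unramified in $L/K$, we have $e(\mathfrak{P}\mid\mathfrak{p})=1$. Therefore $e(\mathfrak{P}\mid p)=1$.

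Fourth, we pass to all primes above $p$. Since $\mathfrak{P}$ was arbitrary, every ramification index of $p$ in $L$ equals $1$, so $p$ is unramified in $L/\mathbb{Q}$. This is the desired contradiction, and the lemma follows.

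The argument is essentially standard, and the only place needing care is the multiplicativity of ramification indices in the second step. If a fully self-contained route is preferred, an alternative is the discriminant tower formula $d_{L}=\pm d_K^{[L:K]}\,N_{K/\mathbb{Q}}(\mathfrak{d}_{L/K})$. Here $p$ ramifies in $L$ exactly when $p\mid d_L$, which forces either $p\mid d_K$, giving case (a), or $p\mid N_{K/\mathbb{Q}}(\mathfrak{d}_{L/K})$. The latter means some $\mathfrak{p}$ above $p$ divides the relative discriminant $\mathfrak{d}_{L/K}$, i.e.\ ramifies in $L/K$, giving case (b). I would present the ideal-factorization argument as the main proof and mention the discriminant version as a remark, since it connects to the discriminant computations used later in the paper.
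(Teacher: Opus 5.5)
Your proof is correct and is essentially the paper's argument: the paper picks a prime $\mathfrak{P}$ above $p$ with $e(\mathfrak{P}\mid p)>1$, sets $\mathfrak{p}=\mathfrak{P}\cap\mathcal{O}_K$, and uses $e(\mathfrak{P}\mid p)=e(\mathfrak{P}\mid\mathfrak{p})\,e(\mathfrak{p}\mid p)$ to force one factor to exceed $1$, which is exactly the contrapositive of your third and fourth steps. Your additions are that you justify the multiplicativity of ramification indices, which the paper only cites, and the discriminant-tower remark; both are sound.
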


The proof of this Lemma is in Appendix \ref{app:proof for ramification of primes}. The odd primes that gets ramified in the coefficient field $K=\mathbb{Q}(\zeta_m)$ divides the field discriminant \(\operatorname{Disc}(K/\mathbb{Q})\), see Appendix \ref{app:cyclotomic-discriminant}. This takes care of the rational primes that appears in $n$ from the part(a) of the Lemma~\ref{lem:ramification of primes}. The rational primes $p$, describes in part(b) of the Lemma \ref{lem:ramification of primes}, can be described in terms of the coefficients of $F_{sf}$ as detailed in Lemma \ref{lem:ramification-discriminant-cleared}.

\begin{lemma}\label{lem:ramification-discriminant-cleared} 
    If $\mathfrak{p}$ is a prime ideal of $\mathcal{O}_K$ lying above \(p\) that ramifies
in \(L\), then 
 $p$ divides $N_{K/\mathbb{Q}}(Disc(f_{\mathrm{sf}}(x))$ or $c$.
\end{lemma}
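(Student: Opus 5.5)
The constant $c$ in Lemma~\ref{lem:ramification-discriminant-cleared} has not yet been defined in the text. I interpret it as the clearing denominator: a positive integer with $c\,f_{\mathrm{sf}}(x)\in\mathcal{O}_K[x]$. Since every eigenvalue of $U$ lies on the unit circle, one may instead take $c$ to be a nonzero rational integer such that $g(x)=c\,f_{\mathrm{sf}}(x)$ is monic after rescaling $x$, or simply has integral coefficients.

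The plan is to localize at $\mathfrak{p}$ and use the classical criterion relating ramification to the discriminant of a generating polynomial. I argue by contrapositive. Assume $p\nmid c$ and $p\nmid N_{K/\mathbb{Q}}(\operatorname{Disc}(f_{\mathrm{sf}}))$, and show that no prime $\mathfrak{p}\mid p$ of $\mathcal{O}_K$ ramifies in $L$.

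\textbf{Step 1 (integrality at $\mathfrak{p}$).} If $p\nmid c$, then $f_{\mathrm{sf}}$ has coefficients in the localization $\mathcal{O}_{K,\mathfrak{p}}$. It is monic, because $f_U$ and $\gcd(f_U,f_U')$ are monic. Hence $\operatorname{Disc}(f_{\mathrm{sf}})\in\mathcal{O}_{K,\mathfrak{p}}$. Moreover, if $\mathfrak{p}$ divided $\operatorname{Disc}(f_{\mathrm{sf}})$, then $p$ would divide its norm, since $N_{K/\mathbb{Q}}(\mathfrak{p})$ is a power of $p$. After clearing denominators prime to $p$, $\operatorname{Disc}(f_{\mathrm{sf}})$ is therefore a unit at $\mathfrak{p}$.

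\textbf{Step 2 (unramified local extension).} Reduce $f_{\mathrm{sf}}$ modulo $\mathfrak{p}$. Its discriminant is a nonzero element of the residue field, so $\bar f_{\mathrm{sf}}$ is separable. By Hensel's lemma, over the completion $K_{\mathfrak{p}}$ the polynomial $f_{\mathrm{sf}}$ factors into monic irreducibles $g_j$ whose reductions are distinct separable irreducibles. For each root $\alpha$ of $g_j$, the ring $\mathcal{O}_{K_\mathfrak{p}}[\alpha]$ has discriminant $\operatorname{Disc}(g_j)$, which divides $\operatorname{Disc}(f_{\mathrm{sf}})$ and is thus a unit. Consequently $\mathcal{O}_{K_\mathfrak{p}}[\alpha]$ is the full ring of integers, and $K_\mathfrak{p}(\alpha)/K_\mathfrak{p}$ is unramified.

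\textbf{Step 3 (compositum).} The local splitting field is the compositum of the fields $K_{\mathfrak{p}}(\lambda_{i_r})$. A compositum of unramified extensions is unramified. By Remark~2, $L$ is also the splitting field of $f_{\mathrm{sf}}$. Hence every prime $\mathfrak{P}$ of $L$ above $\mathfrak{p}$ has ramification index $1$, which contradicts the hypothesis. A quicker global alternative is to use $\operatorname{disc}(\mathcal{O}_{K(\alpha)}/\mathcal{O}_K)\mid \operatorname{Disc}(\text{min poly of }\alpha)$, which divides $\operatorname{Disc}(f_{\mathrm{sf}})$ up to $c$-power denominators, together with the compositum argument.

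The main obstacle is Step 1: pinning down what $c$ must be so that $f_{\mathrm{sf}}$ is $\mathfrak{p}$-integral. The eigenvalues are algebraic, but they need not be algebraic integers, since $U$ may have non-integral entries such as $1/\sqrt2$. I would handle this by taking $c$ to be a common denominator of the entries of $U$. Then $cU$ has integral entries, and the eigenvalues $c\lambda$ are algebraic integers. The polynomial $c^{k}f_{\mathrm{sf}}(x/c)$ is monic and integral, and its discriminant equals $c^{k(k-1)}\operatorname{Disc}(f_{\mathrm{sf}})$. Running Steps 2–3 for this rescaled polynomial gives the claim whenever $p\nmid c$.
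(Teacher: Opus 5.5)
Your proof is correct and follows the paper's route. With $p\nmid c$ the polynomial $f_{\mathrm{sf}}$ is $\mathfrak{p}$-integral, a discriminant prime to $\mathfrak{p}$ gives a separable reduction modulo $\mathfrak{p}$, so $\mathfrak{p}$ is unramified in the splitting field, and taking norms turns $\mathfrak{p}\mid\operatorname{Disc}$ into $p\mid N_{K/\mathbb{Q}}(\operatorname{Disc})$. The paper quotes the polynomial reduction criterion for $F_{\mathrm{sf}}=cf_{\mathrm{sf}}$, whose leading coefficient $c$ is not in $\mathfrak{p}$; your Steps~2--3 prove that criterion directly via completion, Hensel's lemma and the compositum of unramified extensions.

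Your closing ``obstacle'' is not a real one. With $c$ the denominator-clearing integer for the coefficients of $f_{\mathrm{sf}}$, as the paper defines it, Step~1 is immediate. Switching to a common denominator of the entries of $U$ would only prove a weaker statement, since its prime divisors contain those of the paper's $c$.
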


The proof of Lemma~\ref{lem:ramification-discriminant-cleared} is given in Appendix~\ref{app:ramification_discriminant_proof}. 
Since in our case $K=\mathbb{Q}(\zeta_m)$ and the characteristic polynomial $f_{\mathrm{sf}}(x)\in K[x]$, there exists an integer $c$ such that $F_{\mathrm{sf}}(x) = c f_{\mathrm{sf}}(x) \in \mathcal{O}_{K}[x]$. Evidently, $f_{\mathrm{sf}}(x)$ depends on the choice of $c$ but in our case we choose $c$ to be the smallest integer so that $F_{\mathrm{sf}}(x)$  is uniquely determined by $f_{\mathrm{sf}}(x)$. This does not change the roots or their splitting field, but it allows arithmetic results for integral polynomials to be applied. The primes dividing \(c\) must be retained separately because the polynomial has bad denominator behavior at these primes.

We can combine all the arguments above to conclude the following theorem,
\begin{theorem}
\label{thm:prime-divisors-of-cyclotomic-order}
Suppose \(U\) is a periodic unitary matrix with period \(n\) whose entries lie in a number field \(K\). Then, for every odd rational prime $p$ dividing $n$ , divides $\operatorname{Disc}(K/\mathbb{Q})$, or $c$, or $N_{K/\mathbb{Q}}\!\left(\operatorname{Disc}(F_{sf})\right)$.
\end{theorem}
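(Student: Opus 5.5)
The plan is to chain the ramification facts in the order the field diagram of Fig.~\ref{fig:field-inclusions} suggests. Fix an odd prime $p$ dividing the minimal period $n$.

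\emph{Step 1: $p$ ramifies in $L$.} Since $p \mid n$, it ramifies in $\mathbb{Q}(\zeta_n)$. Indeed, $\mathbb{Q}(\zeta_{p^{e}}) \subseteq \mathbb{Q}(\zeta_n)$ for $p^{e}\,\|\, n$, and $p$ is totally ramified in $\mathbb{Q}(\zeta_{p^e})$ with index $\varphi(p^e)>1$ because $p$ is odd. Equivalently, $p$ divides the cyclotomic discriminant recalled in Appendix~\ref{app:cyclotomic-discriminant}. By the result of~\cite{anand2026arithmetic}, periodicity gives $\mathbb{Q}(\zeta_n)\subseteq L$. Ramification indices are multiplicative in towers, $e(\mathfrak{P}/p)=e(\mathfrak{P}/\mathfrak{q})\,e(\mathfrak{q}/p)$ for $\mathfrak{P}\subset\mathcal{O}_L$ over $\mathfrak{q}\subset\mathcal{O}_{\mathbb{Q}(\zeta_n)}$. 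Hence every prime of $L$ above $p$ has index $>1$, so $p$ ramifies in $L/\mathbb{Q}$.

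\emph{Step 2: apply Lemma~\ref{lem:ramification of primes}.} In case (a), $p$ ramifies in $K/\mathbb{Q}$, and therefore $p \mid \operatorname{Disc}(K/\mathbb{Q})$ by Dedekind's discriminant theorem. In case (b), some prime $\mathfrak{p}\subseteq\mathcal{O}_K$ above $p$ ramifies in $L/K$. Here $L$ is the splitting field of $f_{\mathrm{sf}}$ over $K$, and by the second remark it is the same as the splitting field of $F_{\mathrm{sf}}=c f_{\mathrm{sf}}$. Lemma~\ref{lem:ramification-discriminant-cleared} then gives $p\mid c$ or $p \mid N_{K/\mathbb{Q}}(\operatorname{Disc}(F_{\mathrm{sf}}))$. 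Combining the two cases yields the trichotomy in the statement.

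\emph{Main obstacle and hypotheses to check.} The substantive step is Lemma~\ref{lem:ramification-discriminant-cleared}, which we take as given. So the real care lies in the bookkeeping that lets us invoke it.
\begin{itemize}
\item $f_{\mathrm{sf}}$ must be separable, so that $\operatorname{Disc}(F_{\mathrm{sf}})\neq 0$ and its norm is a nonzero integer. This holds by construction in \eqref{eq:square-free-polynomial-definition}.
\item The discriminant normalization must match the one used in the lemma. Rescaling by $c$ multiplies the discriminant by a power of $c$, and those primes are already covered by the alternative ``$p\mid c$''.
\item $L$ must be the compositum of the root fields of the irreducible factors of $F_{\mathrm{sf}}$, so that ramification of $\mathfrak{p}$ in $L$ is detected by some root field $K(\lambda_i)$. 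A prime unramified in each such field is unramified in their compositum. Moreover, $\mathfrak{p}$ ramified in $K(\lambda_i)$ forces $\mathfrak{p}$ to divide the discriminant of $\mathcal{O}_{K(\lambda_i)}$ over $\mathcal{O}_K$. That discriminant divides $\operatorname{Disc}$ of the monic integral minimal polynomial of $c\lambda_i$, and this in turn divides $\operatorname{Disc}(F_{\mathrm{sf}})$ up to powers of $c$. Taking norms passes from the ideal $\mathfrak{p}$ to the rational prime $p$.
\end{itemize}

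If the lemma were unavailable, this is exactly the argument I would reproduce.
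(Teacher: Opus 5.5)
Your proof is correct and follows the paper's argument step for step. You use $\mathbb{Q}(\zeta_n)\subseteq L$ and multiplicativity of ramification indices to show that $p$ ramifies in $L/\mathbb{Q}$, then apply the tower lemma to split into the field-discriminant case and the case covered by the ramification--discriminant lemma; your side remarks (reconciling $\operatorname{Disc}(f_{\mathrm{sf}})$ with $\operatorname{Disc}(F_{\mathrm{sf}})$ through powers of $c$, and the root-field sketch of the lemma) are sound additions rather than a different route.
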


\begin{remark}
When $n\equiv2\pmod4$, the prime $2$ divides $n$ but is
unramified in $\mathbb Q(\zeta_n)$. We therefore include 2 by default in the set of all possible prime divisors of $n$ among the ones determined in Theorem \ref{thm:prime-divisors-of-cyclotomic-order}.
\end{remark}

The proof of this theorem is given in Appendix \ref{app:prrof for the theorem of the prime divisors of the recurrence period}
. These quantities therefore restrict the primes that can occur in a  recurrence period. If $U$ is periodic with period $n$, we write the prime factorization of $n$ as
\begin{equation}
n=\prod_{i=1}^{k} p_i^{e_i}.
\end{equation}
The possible prime divisors $p_i$ arise from three sources. The first is the discriminant of the number field containing the entries of $U$. In our setting, $K=\mathbb{Q}(\zeta_m)$,
where $m$ is an integer depending on the entries of the unitary matrix ($U$), and hence $K$, is determined by the Hamiltonian parameters. The remaining two sources are the integer $c$ used to clear the denominators of the characteristic polynomial and the norm of the discriminant of the resulting polynomial $f_{\mathrm{sf}}(x)\in\mathcal{O}_K[x]$.
Both of these quantities also depend on the Hamiltonian parameters through the coefficients of the characteristic polynomial of \(U\).

From our previous work\cite{anand2026arithmetic}, if \(n\) is the period of \(U\), then $\varphi(n) $ divides $ [K:\mathbb{Q}]\,d!$,
where $d$ is the degree of the characteristic polynomial. In the present work, we instead consider its square-free part \(f_{\mathrm{sf}}(x)\), which has the same splitting field. Therefore, the sharper condition
\begin{equation}
\varphi(n)\mid [K:\mathbb{Q}]\,d_{\mathrm{sf}}!,
\end{equation}
holds, where $d_{\mathrm{sf}}=\deg (f_{\mathrm{sf}})$.
This divisibility condition provides an upper bound on $n$, the period of $U$.

\subsection{Universal exponent for the period}

Let \(S\) denote the finite set of allowed rational primes obtained from the discriminants and norms as given in the Theorem \ref{thm:prime-divisors-of-cyclotomic-order}.
For each \(p\in S\), we define
\begin{equation}
    a_p := \max\left\{  a\in\mathbb{Z}_{\geq 0}  : \varphi\!\left(p^a\right)\mid [K:\mathbb{Q}]\,d_{\mathrm{sf}}!\right\},
\end{equation}
and form the universal exponent
\begin{equation}
    E:=\prod_{p\in S}p^{a_p}.
\end{equation}

In Appendix~\ref{app:proof for finding maximum exponent} we provide how to find $a_p$ for each rational primes $p$. We claim that period \(n\) divides \(E\). Let \(p^a\Vert n\), meaning that \(p^a\mid n\) but \(p^{a+1}\nmid n\). We may write
\begin{equation}
    n=p^a l, \qquad \gcd(p,l)=1.
\end{equation}
By multiplicativity of Euler's totient function, $\varphi(n)=\varphi\!\left(p^a\right)\varphi(l).$ Consequently, $\varphi\!\left(p^a\right) \mid \varphi(n) \mid [K:\mathbb{Q}]\,d_{\mathrm{sf}}!.$ It follows from the definition of \(a_p\) that \(a\leq a_p\). Since this holds
for every prime-power divisor of \(n\), we obtain $ n\mid E.$ This gives the equivalence
\begin{equation}
    U\ \text{is  periodic} \quad\Longleftrightarrow\quad  U^E\ \text{is a scalar matrix}.
\end{equation}
To prove the forward implication, let \(E=qn\) for some integer $q$. Then $ U^E = \left(U^n\right)^q = \tau^q I,$
which is scalar. The reverse implication is immediate: if $ U^E=\eta I$
for some \(\eta\in\mathbb{C}\), then \(U\) is periodic with a
minimal  period dividing \(E\).
\begin{remark}
We emphasize that \(E\) need not itself satisfy $\varphi(E)\mid  [K:\mathbb{Q}]\,d_{\mathrm{sf}}!.$ The integer \(E\) is not being proposed as a candidate for the minimal period.
\end{remark}

\subsection{Testing periodicity using the minimal polynomial}

We can test whether $U^E =\tau I$ without calculating the  the matrix power \(U^E\). Since \(U\) is unitary, it is diagonalizable over
\(\mathbb{C}\). Observe that the square free polynomial, $f_{\mathrm{sf}}(x)$ given in Eq.\eqref{eq:square-free-polynomial-definition}, is precisely the minimal polynomial of the matrix $U$ \cite[Theorems~5.62 and~7.55]{Axler2024LinearAlgebra}.

\begin{theorem}\label{thm:polynomial-remainder}
Let  $U$ be a unitary, and let $f_{\mathrm{sf}}(x)=\operatorname{sf}(p_U(t))$ be the square-free part of its characteristic polynomial. Since \(U\) is diagonalizable, \(f_{\mathrm{sf}}(x)\) is the minimal polynomial of \(U\).

For \(k\geq 1\), define $\rho_k(x):=x^k\bmod f_{\mathrm{sf}}(x),$
where \(\rho_k(x)\) is the unique remainder satisfying $\deg \rho_k<\deg f_{\mathrm{sf}}=d_{\mathrm{sf}}$.
Then, for any \(\tau\in K\),
\begin{equation}
    U^k=\tau I \quad\Longleftrightarrow\quad  \rho_k(x)=\tau,
\end{equation}
where the right-hand side means that \(\rho_k(x)\) is the constant polynomial \(\tau\).
\end{theorem}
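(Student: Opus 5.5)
The plan is to reduce the matrix identity $U^k=\tau I$ to a polynomial identity modulo the minimal polynomial, and then use the degree bound on remainders to pin the remainder down exactly. First I would justify the claim that $f_{\mathrm{sf}}$ is the minimal polynomial of $U$. Since $U$ is unitary, it is diagonalizable as in Eq.~\eqref{eq:unitary-diagonal-form}. Hence its minimal polynomial is $\prod_{j}(x-\lambda_{i_j})$, taken over the distinct eigenvalues. By Eq.~\eqref{eq:square-free-polynomial} this product is exactly $f_{\mathrm{sf}}(x)$. Both polynomials are monic, because $f_U$ is monic and so is the gcd used in Eq.~\eqref{eq:square-free-polynomial-definition}. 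The key consequence is the standard fact that, for any $g\in K[x]$ (indeed any $g\in\mathbb{C}[x]$), $g(U)=0$ if and only if $f_{\mathrm{sf}}\mid g$.

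Next I would perform division with remainder in $K[x]$, writing $x^k=q_k(x)f_{\mathrm{sf}}(x)+\rho_k(x)$ with $\deg\rho_k<d_{\mathrm{sf}}$. Evaluating at $U$ gives $U^k=q_k(U)f_{\mathrm{sf}}(U)+\rho_k(U)=\rho_k(U)$. The whole question therefore becomes: when is $\rho_k(U)=\tau I$?

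For the direction ($\Leftarrow$): if $\rho_k$ is the constant polynomial $\tau$, then $U^k=\rho_k(U)=\tau I$ immediately.

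For the direction ($\Rightarrow$): suppose $U^k=\tau I$. Set $g(x):=\rho_k(x)-\tau$, which lies in $K[x]$ because $\tau\in K$. Then $g(U)=U^k-\tau I=0$, so $f_{\mathrm{sf}}\mid g$ by the minimal-polynomial property. The degree bound now finishes the argument, provided $d_{\mathrm{sf}}\ge 1$, which always holds. We have $\deg g\le\max(\deg\rho_k,0)<d_{\mathrm{sf}}$ whenever $d_{\mathrm{sf}}\ge 2$, so the only multiple of $f_{\mathrm{sf}}$ with this degree is $g=0$, i.e.\ $\rho_k=\tau$.

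The degenerate case $d_{\mathrm{sf}}=1$ needs its own treatment, since then $\rho_k$ is automatically a constant. Here $U=\lambda I$ and $f_{\mathrm{sf}}=x-\lambda$, so $\rho_k=\lambda^k$, and $U^k=\tau I$ holds exactly when $\lambda^k=\tau$, i.e.\ when $\rho_k=\tau$. In fact this case is already covered by the general argument: $g$ is then a constant divisible by a degree-one polynomial, so $g=0$. The only real point needing care is that uniqueness of the remainder is taken in $K[x]$ while divisibility by the minimal polynomial is over $\mathbb{C}$. This is harmless because $f_{\mathrm{sf}}\in K[x]$ is monic, so the division algorithm gives the same quotient and remainder in $K[x]$ and in $\mathbb{C}[x]$.
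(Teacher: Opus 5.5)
Your proof is correct and takes essentially the same route as the paper. Both combine the minimal-polynomial characterization $q(U)=0 \iff f_{\mathrm{sf}}\mid q$ with division with remainder: the paper argues that the remainder of $x^k-\tau$ is $\rho_k-\tau$, while you evaluate $x^k=q_k f_{\mathrm{sf}}+\rho_k$ at $U$ and apply the degree bound to $\rho_k-\tau$. The separate case $d_{\mathrm{sf}}=1$ is unnecessary, as you note yourself, since $\deg(\rho_k-\tau)<d_{\mathrm{sf}}$ holds for every $d_{\mathrm{sf}}\ge 1$, so your condition ``whenever $d_{\mathrm{sf}}\ge 2$'' can simply be dropped.
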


The proof of Theorem~\ref{thm:polynomial-remainder} is given in Appendix~\ref{app:polynomial-remainder-proof}.
Using the above theorem, testing the periodicity of $U$ reduces to computing $\rho_E(x) =  ^E\bmod f_{\mathrm{sf}}(x).$
If \(\rho_E(x)\) contains any nonzero term of positive degree, then \(U^E\) is not scalar and \(U\) is not periodic. If $\rho_E(x)=\eta$ is constant, then $ U^E=\eta I,$ and the period divides \(E\). This gives us the necessary and sufficient condition for $U$ being periodic. Once the universal exponent \(E\) is obtained, the minimal recurrence period is determined by testing its divisors \(n\mid E\) for which \(x^n \bmod f_{\mathrm{sf}}(x)\) is constant--equivalently, \(U^n=\tau I\)--with the polynomial remainders evaluated efficiently by binary modular exponentiation.

\section{Algorithm}\label{sec:algorithm}
Goal: Given a unitary matrix $U$ with entries in a cyclotomic field $K=\mathbb{Q}(\zeta_m)$, we want to give a criterion if $U$ is periodic or not. If it is periodic then find $n$ such that $U^n=\tau I$
\begin{itemize}
    \item Step 1: Find the characteristic polynomial $f_U(x) \in K[x]$ of $U$.\\
    \item Step 2: Find the gcd of $f_U(x)$ and $f'_U(x)$. \\
    \item Step 3: Find square free $f_{\mathrm{sf}}(x) = \frac{f_U(x)}{gcd(f_U(x),f'_U(x))}$. \\
    \item Step 4: Find the common rational integer denominator, $c$  of the coefficients of $f_{\mathrm{sf}}(x)$  such that $F_{\mathrm{sf}}(x) = c  f_{\mathrm{sf}}(x) \in \mathcal{O}_K[x]$. \\
    \item Step 5: Find the discriminant $\Delta_{sf}$ of $F_{\mathrm{sf}}(x)\in\mathcal{O}_K[x]$ . \\
    \item Step 6: Find the norm $N_{K/\mathbb{Q}}(\Delta_{sf}) \in \mathbb{Z}$.  \\ 
    \item  Step 7: Find the prime factors of the $N_{K/\mathbb{Q}}(\Delta_{sf}) $, $Disc(K/\mathbb{Q})$ and $c$. This gives us a set of primes, $\mathcal{P}=\{p_1,p_2,\dots,p_q\}$, which is guaranteed to contain all the primes appearing in $n$. \\
    \item Step 8: For each given prime $p\in\mathcal{P}$, find the maximum power, $a_p$ of $p$ such that $\varphi(p^{a_p})$ divides $d_{sf}![K:\mathbb{Q}]$.  Calculate maximum exponent $E$ as $ E=\prod_{p\in S}p^{a_p}.$ \\
    \item Step 9: Calculate  $\rho_E(x)=x^E\bmod f_{\mathrm{sf}}(x),$ if it is a not a constant then $U$ is not periodic so reject this $U$. If $\rho_E(x)$ is constant then $U$ is periodic and find $n$ using next step. 
    \item Step 10: Set $k=E$. For each $p\in\mathcal P$, while $p\mid k$, compute
   \[
    \rho_{k/p}(x)=x^{k/p}\bmod f_{\mathrm{sf}}(x).
   \]
    If $\rho_{k/p}(x)$ is constant, replace $k$ by $k/p$ and repeat
   for the same prime $p$. Otherwise, move to the next prime.
   After all primes have been tested, set $n_0=k$. This is the
   minimal period, with
  \[
     U^{n_0}=\tau I,
     \qquad
    \tau=\rho_{n_0}(x).
\]
\end{itemize}

\section{ Exact recurrence in the kicked Ising circuit}\label{sec:recurrence in kicked ising model}
In this section we will apply our method to well known model, kicked Ising 1D chain~\cite{Prosen1998Integrable,Prosen2000KickedIsing,bertini2019entanglement}.
The periodically kicked Ising model and closely related Floquet circuits have become important platforms for studying many-body quantum dynamics. At the self-dual point, particle--time duality enables exact results for random-matrix spectral statistics, entanglement growth, and reduced-state dynamics \cite{akila2016particle,bertini2018exact,bertini2019entanglement,flack2020statistics,gopalakrishnan2019unitary}, while the same structure produces exact projected quantum-state designs for suitable open-chain parameters \cite{ho2022exact}. Away from self-duality, the model provides a controlled setting for investigating quantum chaos and the transition to many-body localization \cite{braun2020transition,sierant2023stability,herrmann2023characterizing}. Special Clifford parameter choices permit efficient circuit constructions, stabilizer-state generation, cellular-automaton descriptions, and finite-size recurrence analysis \cite{mishra2015protocol,sommers2023crystalline}, whereas generic non-Clifford regimes exhibit rapid magic generation and, at particular points, exact magic and unitary revivals \cite{vonKeyserlingk2018Operator,tirrito2024anticoncentration,jha2026magic}. The model has also become a major benchmark for large-scale quantum hardware and classical simulation methods \cite{kim2023evidence,tindall2024efficient,begusic2024fast,liao2023simulation}, making it a natural testbed for developing arithmetic methods capable of identifying exact recurrences in both Clifford and non-Clifford Floquet dynamics.

We consider an open chain of \(N\) qubits evolving under the kicked-Ising Floquet unitary
\begin{equation}
U_F=U_he^{-iH_{\mathrm{Ising}}},
\label{eq:floquet_unitary}
\end{equation}
where
\begin{equation}
U_h
=
\exp\left(-ih\sum_{j=1}^{N}\sigma_j^{y}\right)
\label{eq:global_kick}
\end{equation}
is a global rotation about the \(y\) axis, and
\begin{equation}
H_{\mathrm{Ising}}
=
J\sum_{j=1}^{N-1}\sigma_j^{z}\sigma_{j+1}^{z}
+
g\sum_{j=1}^{N}\sigma_j^{z}
+
b_1\sigma_1^{z}
+
b_N\sigma_N^{z}.
\label{eq:ising_hamiltonian}
\end{equation}
Here, \(\sigma_j^\alpha\), with \(\alpha\in\{x,y,z\}\), denotes the Pauli operator acting on qubit \(j\). The parameter \(J\) determines the nearest-neighbor Ising interaction strength, \(g\) is a spatially uniform longitudinal field, \(h\) is the transverse kick angle, and \(b_1\) and \(b_N\) are additional longitudinal fields applied to the two boundary qubits. We set the duration of each Floquet period to unity, so that the rightmost factor in Eq.~\eqref{eq:floquet_unitary} describes the Ising evolution followed by the global \(y\)-rotation. For generic parameter values, this model is non-integrable and has no nontrivial global conservation laws.

\begin{example}
To illustrate the method, consider the four-qubit (\(N=4\)) kicked Ising model defined above, with $J=h=g=b_1=b_N=\frac{\pi}{4}.$
For this parameter choice, a suitable number field is $K=\mathbb{Q}(\zeta_8)$. The Hilbert-space dimension is $d=2^4=16$, and the characteristic polynomial of the Floquet unitary $U_F$ is
\begin{equation}
\begin{aligned}
f_{U_F}(x) ={}&x^{16} +\zeta_8 x^{15} +\zeta_8 x^{11} +\zeta_4 x^{10} -\zeta_4 x^6 \\
&-\zeta_8^3 x^5 -\zeta_8^3 x +1,
\end{aligned}
\end{equation}
where $\zeta_l =e^{2\pi i/l}$ denotes a primitive \(\ell\)th root of unity.
Differentiating \(f_{U_F}(x)\) gives
\begin{equation}
\begin{aligned}
f_{U_F}'(x) ={}&16x^{15} +15\zeta_8 x^{14} +11\zeta_8 x^{10} +10\zeta_4 x^9 \\
&-6\zeta_4 x^5 -5\zeta_8^3 x^4 -\zeta_8^3.
\end{aligned}
\end{equation}
The greatest common divisor of $f_{U_F}(x)$ and $f_{U_F}'(x)$ is
\begin{equation}
\gcd\!\left(f_{U_F},f_{U_F}'\right) = x^6+\zeta_8 x^5+\zeta_8 x+\zeta_4.
\end{equation}
Therefore, the square-free part of the characteristic polynomial is
\begin{equation}
\begin{aligned}
f_{\mathrm{sf}}(x) &= \frac{f_{U_F}(x)} {\gcd\!\left(f_{U_F}(x),f_{U_F}'(x)\right)} \\
&=x^{10}-\zeta_4,
\end{aligned}
\qquad
d_{\mathrm{sf}}=10.
\end{equation}
Here we can see that the original dimension of the system was $d=16$, but the dimension of square-free polynomial reduces to $d_{sf}=10$. Because \(U_F\) is unitary and hence diagonalizable, its minimal
polynomial contains each distinct eigenvalue exactly once. Since \(\zeta_4\in\mathcal{O}_K\), no denominator clearing is required as $c=1,$ implies $ f_{\mathrm{sf}}(x)=F_{\mathrm{sf}}(x).$
The polynomial discriminant is
\begin{equation}
\operatorname{Disc}\!\left(F_{\mathrm{sf}}\right) = 10^{10}\zeta_4.
\end{equation}
For \(K=\mathbb{Q}(\zeta_8)\), we have
\begin{equation}
\begin{aligned}
\left| N_{K/\mathbb{Q}} \!\left( \operatorname{Disc}\!\left(F_{\mathrm{sf}}\right) \right) \right|
&=10^{40}, \qquad \operatorname{Disc}(K/\mathbb{Q})=256,\\
N_{K/\mathbb{Q}}(c)&=1.
\end{aligned}
\end{equation}
The resulting set of rational primes is therefore $\mathcal{P}=\{2,5\}.$
The degree bound entering the arithmetic criterion is $[K:\mathbb{Q}]\,d_{\mathrm{sf}}! = 4(10)! = 14\,515\,200.$ The maximal prime-power exponents compatible with the bound are $a_2=11,$ and $a_5=3,$ which give the universal exponent $E=2^{11}5^3=256\,000.$

We first determine whether an exact recurrence exists by evaluating the remainder at the universal exponent, $\rho_E(x) = x^E \bmod f_{\mathrm{sf}}(x).$
For the present example, \(E=256000\) and \(f_{\mathrm{sf}}(x)=x^{10}-\zeta_4\). Therefore,
\begin{equation}
\begin{aligned}
\rho_E(x) &= x^{256000} \bmod \left(x^{10}-\zeta_4\right) \\
&= \left(x^{10}\right)^{25600} \bmod \left(x^{10}-\zeta_4\right) \\
&= \zeta_4^{25600} = 1.
\end{aligned}
\end{equation}
Since \(\rho_E(x)\) is constant, \(U_F^E=I\), establishing that an exact recurrence exists.

Having established recurrence, we determine the minimal recurrence period by testing the divisors \(n\mid E\) and identifying the smallest
one for which $\rho_n(x) = x^n\bmod f_{\mathrm{sf}}(x)$ is constant. We obtain
\begin{equation}
\rho_{10}(x) = x^{10}\bmod\left(x^{10}-\zeta_4\right) = \zeta_4.
\end{equation}
and for every divisor \(n\mid E\) with \(1\leq n<10\), the remainder is nonconstant. Hence the minimal recurrence period is $n_{rec}=10$ with $U_F^{10}=\zeta_4 I=iI.$

For this parameter choice, \(U_F\) is a Clifford unitary. The arithmetic criterion itself, however, makes no assumption about Cliffordness and
therefore allows exact recurrence to be studied independently of the Clifford character of the Floquet circuit.
\end{example}

\subsection{Exact recurrence in the Clifford regime}
First we restrict attention to the one-parameter family
\begin{equation}
J=h=b_1=b_N=\frac{\pi}{4}, \qquad g=\frac{r}{s}\pi, \qquad r,s\in\mathbb{Z}.
\label{eq:fixed_parameters}
\end{equation}
where $gcd(r,s)=1$ The choices \(J=h=\pi/4\) place the bulk circuit at the self-dual point, while the boundary fields \(b_1=b_N=\pi/4\) are fixed throughout the analysis. At these fixed parameter values, all gates other than the longitudinal-field rotation are Clifford. 

Since the single-qubit rotation \(e^{-ig\sigma^z}\) is Clifford precisely when \(g\) is an integer multiple of \(\pi/4\)\cite{Gottesman1998FaultTolerant}. The complete Floquet operator satisfies
\begin{equation}
U_F(g)\in\mathcal{C}_N \quad\Longleftrightarrow\quad g\in\frac{\pi}{4}\mathbb{Z},
\label{eq:clifford_condition}
\end{equation}
where \(\mathcal{C}_N\) denotes the \(N\)-qubit Clifford group.

We define the minimal recurrence period by
\begin{equation}
n_{\mathrm{rec}}(N,g)= \min\{ n\in\mathbb{N}: U_F(g)^n= \tau I_{2^N} \text{ for some }  \tau \in U(1).
\label{eq:minimal_recurrence_period}
\end{equation}
Because the Clifford group modulo global phases is finite, every Clifford Floquet operator necessarily possesses a finite recurrence period \cite{vonKeyserlingk2018Operator}. Our arithmetic method determines and verifies the corresponding minimal period. For the Clifford point $g=\frac{\pi}{4}$,
we calculated \(n_{\mathrm{rec}}\) for system sizes \(N=2,\ldots,10\). The resulting periods are displayed in Fig.~\ref{fig:recurrence_vs_qubits}. Although the recurrence period exhibits an overall tendency to increase with the number of qubits, its dependence on \(N\) is non-monotonic. In particular, the period decreases from \(48\) at \(N=5\) to \(18\) at \(N=6\), before increasing again for larger system sizes, see Fig. \ref{fig:recurrence_vs_qubits}.
\begin{figure}[t]
    \centering
    \includegraphics[
        trim={20mm 1mm 20mm 1mm},
        clip,
        width=\columnwidth
    ]{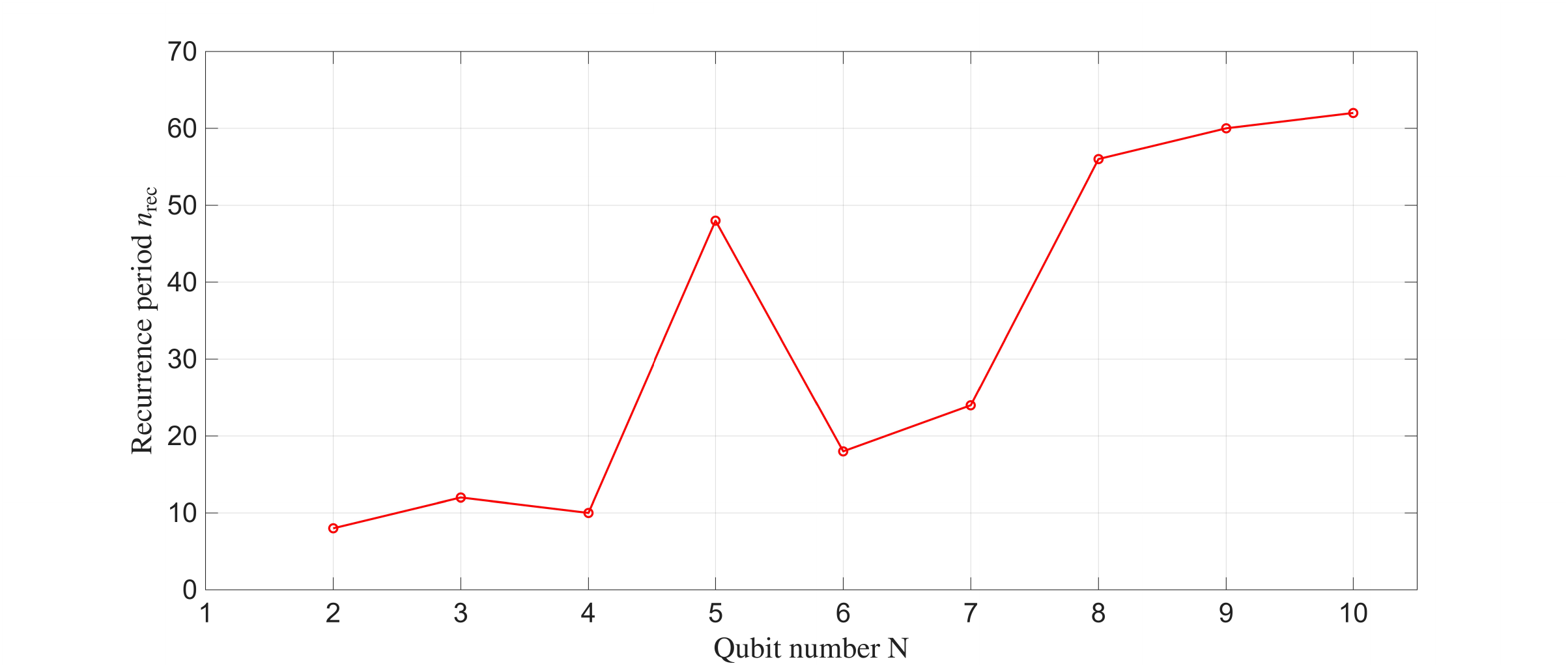}
    \caption{
        Minimal recurrence period \(T_{\mathrm{rec}}\) as a function of the number \(N\) of qubits for the kicked Ising circuit with \(J=g=h=b_1=b_N=\pi/4\). The period is defined as the smallest positive integer satisfying \(U_F^{T_{\mathrm{rec}}}=e^{i\phi}I_{2^N}\). For \(N=2,\ldots,10\), the corresponding periods are
        \(8,12,10,48,18,24,56,60,\) and \(62\), demonstrating a non-monotonic dependence on system size.
    }
    \label{fig:recurrence_vs_qubits}
\end{figure}

\subsection{Periodic and aperiodic non-Clifford regimes}
\label{sec:periodic_aperiodic_nonclifford}

We now determine whether algebraically non-Clifford kicked-Ising circuits can exhibit exact recurrence. We first retain the parameter restrictions
\begin{equation}
\label{eq:fixed_parameters_nonclifford_scan}
J=h=b_1=b_N=\frac{\pi}{4}
\end{equation}
and scan the longitudinal field over
\begin{equation}
\label{eq:nonclifford_parameter_scan}
g=\frac{r}{16}\pi, \qquad r\in\{1,2,\ldots,15\}, \qquad N\in\{3,4\}.
\end{equation}
According to the Clifford condition in
Eq.~\eqref{eq:clifford_condition}, the Floquet operator is Clifford only for $r_{\mathcal{C}_N}\in\{4,8,12\},$ for which $\frac{r}{16}\in\frac{1}{4}\mathbb{Z}.$
The remaining values, therefore generate non-Clifford Floquet circuits.

For both \(N=3\) and \(N=4\), our exact arithmetic analysis excludes a finite recurrence period at every
non-Clifford point such that $r \notin r_{\mathcal{C}_N}$. Consequently, within the parameter grid in
\eqref{eq:nonclifford_parameter_scan}, exact recurrence occurs only at the three Clifford points,
for $r_{\mathcal{C}_N}\in\{4,8,12\}$. In particular, these results show that taking all circuit angles to be rational multiples of \(\pi\) is not sufficient to guarantee an exact recurrence.

The arithmetic criterion rigorously certifies exact aperiodicity for these parameter choices. Specifically, it proves that no finite \(n\) satisfies the recurrence condition, independently of any finite search cutoff. Thus, these circuits are exactly aperiodic in the sense of Eq.~\eqref{eq:minimal_recurrence_period}. This result concerns exact equality up to a global phase and does not preclude approximate recurrences.

Non-Cliffordness does not necessarily imply exact aperiodicity. To demonstrate this, we consider a second family of parameters,
\begin{equation}
\label{eq:nonclifford_recurrent_family}
h=\frac{\pi}{2}, \qquad J=\frac{\pi}{2s}, \qquad s\in\mathbb{Z}_{>0},
\end{equation}
with longitudinal and boundary fields specified below. We first take
\begin{equation}
\label{eq:recurrent_family_fields_pi_eight}
g=b_1=b_N=\frac{\pi}{8}.
\end{equation}
At \(h=\pi/2\), the transverse kick is Clifford, so the complete Floquet unitary is Clifford if and only if its diagonal Ising-and-field layer is Clifford. A single-qubit rotation \(e^{-i\theta\sigma_z}\) is Clifford if and only if \(\theta\in\frac{\pi}{4}\mathbb{Z}\)~\cite{Gottesman1998FaultTolerant}. Define the effective on-site angles by \(\theta_1=g+b_1\), \(\theta_N=g+b_N\), and \(\theta_j=g\) for \(2\leq j\leq N-1\). For the chosen fields, the endpoint angles are \(\pi/4\), while every interior angle is \(\pi/8\). The diagonal layer commutes with every \(\sigma_z^{(j)}\), so its Clifford status can be checked by whether conjugation maps each \(\sigma_x^{(j)}\) to a Pauli string. For this
parameter family, that condition requires \(J\in\frac{\pi}{4}\mathbb{Z}\) and every \(\theta_j\in\frac{\pi}{4}\mathbb{Z}\). Consequently, for \(N\geq3\), the complete Floquet unitary is non-Clifford for every \(s\geq1\). For \(N=2\), there are no interior qubits, and the complete circuit is Clifford for \(s=1,2\) and non-Clifford for \(s\geq3\).

Our arithmetic method first detected exact recurrence throughout the parameter range
\begin{equation}
N=2,3,\ldots,12, \qquad s=1,2,\ldots,50,
\end{equation}
and returned the minimal period $n_{\mathrm{rec}}=2s$
for every examined pair \((N,s)\). The independence of the detected period from \(N\) suggested that this behavior is an exact structural property of the Floquet operator rather than a finite-size coincidence.

Guided by this arithmetic result, we subsequently proved that, for an open chain with \(N\geq2\),
\begin{equation}
\label{eq:analytical_recurrence_result}
U_F^{2s} = (-1)^{Ns+N-1}I_{2^N},
\end{equation}
and that no smaller positive power of \(U_F\) is proportional to the identity, resulting in $n_{\mathrm{rec}}=2s.$
The proof is presented in Appendix~\ref{app:analytical_recurrence_proof}. The recurrence follows from the cancellation of the longitudinal and boundary fields after two Floquet periods at \(h=\pi/2\). In particular, this result holds independently of \(g\), \(b_1\), and \(b_N\).
Eq.~\eqref{eq:analytical_recurrence_result} therefore establishes an infinite family of exactly recurrent non-Clifford Floquet circuits.

\begin{figure}[t]
    \centering
    \includegraphics[
        trim={20mm 1mm 20mm 1mm},
        clip,
        width=\columnwidth
    ]{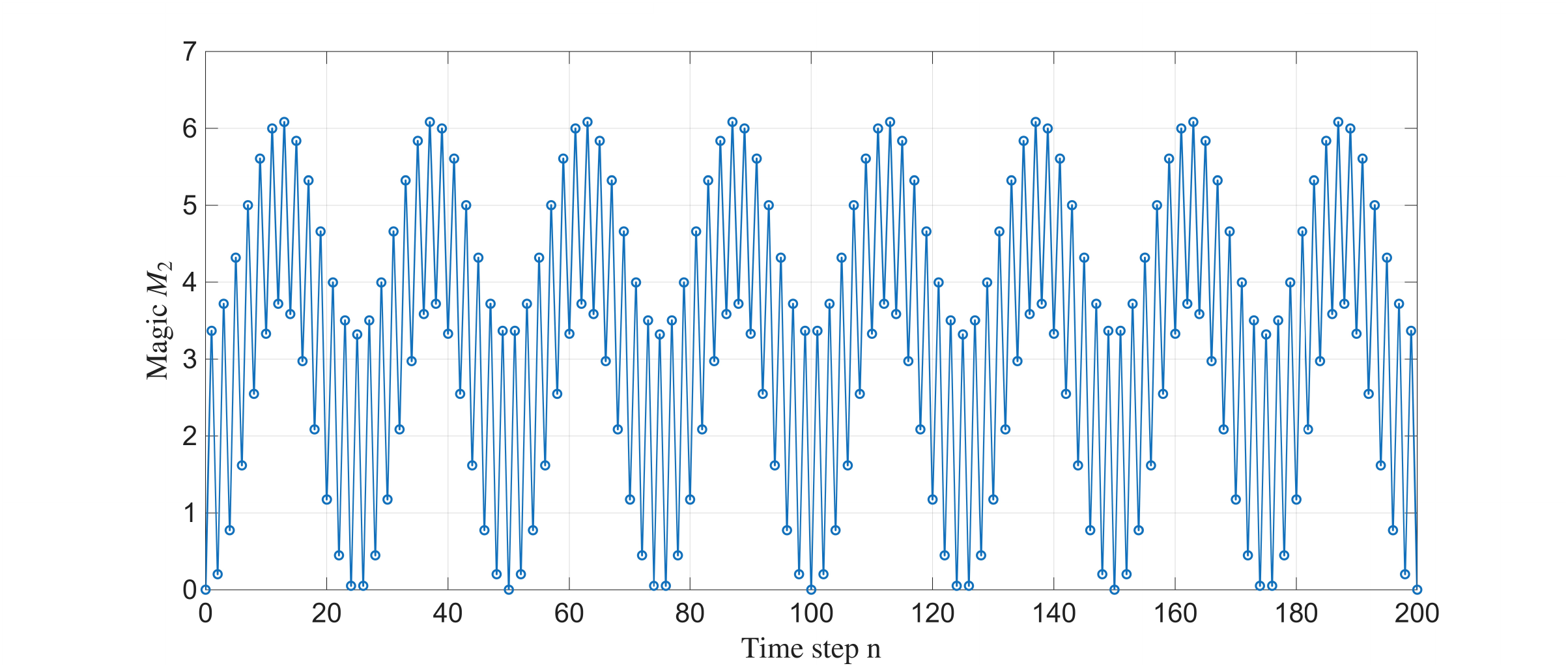}
    \caption{
        Magic generation in a periodically kicked Ising chain. The second stabilizer R\'enyi entropy \(M_2\) is shown as a function of the number \(t\) of Floquet periods for a chain of \(N=10\) qubits. The parameters are \(J=\pi/100\), \(g=\pi/8\), \(b_1=b_L=\pi/8\), \(h=\pi/2\), and \(\tau=1\). Initial state is taken as stablizer state $\psi(0)=\ket{+}^N$. Each integer time step corresponds to one application of the Floquet operator \(U_F\). For this parameter set, $U^{200}=I$
    }
    \label{fig:magic_vs_time}
\end{figure}

\subsection{Magic dynamics in the recurrent non-Clifford regime}
\label{sec:magic_nonclifford_recurrence}

We next examine whether the exactly recurrent non-Clifford circuits identified above generate magic during their intermediate-time evolution. 
We initialize the system in a fixed \(N\)-qubit stabilizer state,
\begin{equation}
\label{eq:initial_stabilizer_state}
\lvert\psi(0)\rangle = \lvert\psi_{\mathrm{stab}}\rangle = \lvert+_x\rangle^{\otimes N},
\end{equation}
 where \(\lvert+_x\rangle\) is the \(+1\) eigenstate of the Pauli
operator \(\sigma^x\), and evolve it for \(n\) Floquet periods according to
\begin{equation}
\label{eq:time_evolved_stabilizer_state}
\lvert\psi(n)\rangle = U_F^n\lvert\psi_{\mathrm{stab}}\rangle.
\end{equation}
The magic of the resulting pure state is quantified using the second stabilizer R\'enyi entropy introduced in Ref.~\cite{Leone2022Stabilizer},
\begin{equation}
\label{eq:M2_main_text}
M_2(n) = -\log_2\!\left[ \frac{1}{2^N} \sum_{P\in\mathcal{P}_N} \left| \langle\psi(n)\rvert P \lvert\psi(n)\rangle \right|^4 \right],
\end{equation}
where $\mathcal{P}_N = \{I,X,Y,Z\}^{\otimes N}$
is the set of \(4^N\) Hermitian Pauli strings, with global phases omitted. We evaluate Eq.~\eqref{eq:M2_main_text} exactly using the XOR-based fast Walsh--Hadamard-transform algorithm developed in Refs.~\cite{Huang2025XORFWHT,Xiao2026FWHT,Sierant2026Magic}. 

For a pure state, the second stabilizer R\'enyi entropy is faithful:
\begin{equation}
\label{eq:M2_faithfulness}
M_2(n)=0
\quad\Longleftrightarrow\quad
\lvert\psi(n)\rangle
\text{ is a stabilizer state},
\end{equation}
whereas $M_2(n)>0$ certifies that \(\lvert\psi(n)\rangle\) is a non-stabilizer, or magic, state \cite{Leone2022Stabilizer}. Since the initial state is a stabilizer state has $M_2(0)=0$ and Clifford unitaries preserve the set of stabilizer states, observing \(M_2(n)>0\) at any intermediate time certifies that \(U_F^n\), and consequently \(U_F\), is non-Clifford. The converse is not generally true, since a non-Clifford unitary may map a particular stabilizer input to another stabilizer state at selected times.

To calculate the magic, we therefore choose
\begin{equation}
\label{eq:magic_recurrent_parameters}
g=b_1=b_L=\frac{\pi}{8},
\qquad
h=\frac{\pi}{2},
\qquad
J=\frac{\pi}{2s}.
\end{equation}
These parameters remain within the exactly recurrent family and satisfy Eq~\eqref{eq:analytical_recurrence_result}. 
In particular, the data presented in Fig.~\ref{fig:magic_vs_time} correspond to $N=10,  s=50, n_{\mathrm{rec}}=100.$
For these parameter values given in Eq.\eqref{eq:magic_recurrent_parameters}, the Floquet operator satisfies
\begin{equation}
\label{eq:nonclifford_recurrent_conditions}
U_F\notin\mathcal{C}_N, \qquad U_F^{n_{\mathrm{rec}}}=e^{i\phi}I_{2^N},
\end{equation}
where \(\mathcal{C}_N\) denotes the \(N\)-qubit Clifford group and \(n_{\mathrm{rec}}\) is the minimal recurrence period.

The exact recurrence in Eq.~\eqref{eq:nonclifford_recurrent_conditions} implies
\begin{equation}
\begin{aligned}
\lvert\psi(n+n_{\mathrm{rec}})\rangle
&= U_F^{n+n_{\mathrm{rec}}} \lvert\psi_{\mathrm{stab}}\rangle \nonumber\\
&= e^{i\phi} U_F^n\lvert\psi_{\mathrm{stab}}\rangle = e^{i\phi}\lvert\psi(n)\rangle.
\end{aligned}
\label{eq:state_recurrence_magic}
\end{equation}
Because \(M_2\) is invariant under a global phase, it follows that
\begin{equation}
\label{eq:M2_periodicity}
M_2(n+n_{\mathrm{rec}}) = M_2(n).
\end{equation}
In particular, the initial stabilizer state is recovered at every non-negative integer multiple of the recurrence period, so that $M_2(k n_{\mathrm{rec}})=0, k\in\mathbb{Z}_{\geq0}.$
The nonzero intermediate values shown in Fig.~\ref{fig:magic_vs_time} therefore demonstrate that the circuit generates magic before returning exactly to the initial stabilizer state. 

For any pure state of \(N\) qubits, the second stabilizer R\'enyi entropy satisfies the general bound \cite{Leone2022Stabilizer}
\begin{equation}
\label{eq:M2_bound_main}
0\leq M_2(n)\leq\log_2\!\left(\frac{2^N+1}{2}\right)=N-1+\log_2\!\left(1+2^{-N}\right).
\end{equation}
For the \(N=10\) system considered in Fig.~\ref{fig:magic_vs_time}, this bound becomes
\begin{equation}
\label{eq:M2_bound_L10}
0 \leq M_2(n) \leq \log_2\!\left(\frac{1025}{2}\right) \simeq 9.0014.
\end{equation}
During the time evolution of $M_2$ shown in Fig.~\ref{fig:magic_vs_time}, the
maximum observed value is $M_2^{\mathrm{max}}\approx6.0822.$

We emphasize that the minimal period of \(M_2(n)\) may be a proper divisor of \(n_{\mathrm{rec}}\). Therefore, a revival of \(M_2\) alone does not establish the recurrence of the complete Floquet operator; the latter is determined independently through the operator identity in Eq.~\eqref{eq:nonclifford_recurrent_conditions}. Related oscillatory magic dynamics and exact revivals were recently obtained analytically for a kicked-Ising Floquet chain at the solvable parameter point \(g_{\mathrm{kick}}=\pi/2\) and \(\theta_{ZZ}=\pi/8\) \cite{jha2026magic}. At this point, the stabilizer R\'enyi entropy revives after every two kicks, while the complete Floquet operator satisfies $U_F^{8}\propto I.$
This distinction also illustrates that a revival of the magic measure need not coincide with the recurrence period of the complete Floquet operator.

\section{Summary and Discussions}\label{sec:discussions}

Previous studies of exact recurrences in periodically driven systems have proposed several quantum protocols with genuine quantum advantage~\cite{zou_2025_enhancing,biswas_2025_the,Romero2026Battery,SreeramSanthanam2026}. In this work, we propose an arithmetic criterion for exact recurrence of a Floquet unitary and establish a necessary and sufficient condition for its occurrence. We extend our previous work~\cite{anand2026arithmetic} and here we begin with the characteristic polynomial of the unitary and study the relation between its roots and the polynomial discriminant. This approach is possible because of the inclusion of the various fields involved as in Fig.\ref{fig:field-inclusions}, forced by the periodicity of the unitary.
Their hierarchy allows us to follow the behavior of primes in the the relevant field extensions and consequently identify which primes can divide the recurrence period, when it exists. Crucially, the only field information we need at the outset is the number field determined by the Hamiltonian parameters; the remaining restrictions follow from the unitary and these field relations. We then use an upper bound on the period to determine its minimal value. In many physical systems, the parameters are rational multiples of $\pi$, allowing the relevant number field to be specified explicitly. Our framework applies to finite-dimensional unitaries with entries in a number field and does not require a particular operator decomposition or circuit structure. Additional properties of a given model may permit simpler recurrence tests or more efficient algorithms. 

We apply our framework to the well-known one-dimensional kicked Ising chain, considering several sets of Hamiltonian parameters in both the Clifford and non-Clifford regimes~\cite{Prosen1998Integrable,Prosen2000KickedIsing,bertini2019entanglement}. In Ref.~\cite{sommers2023crystalline}, the authors studied recurrence times in translation-invariant dual-unitary Clifford circuits using Clifford quantum cellular automata, including a class related to self-dual kicked Ising. Their recurrence times are defined modulo Pauli signs and spatial shifts. Here, we test the condition \(U_F^n=\tau I\), including at parameter values outside the Clifford and self-dual regimes. For all Clifford parameter sets studied, the corresponding Floquet unitary has a finite recurrence period up to a global phase \cite{vonKeyserlingk2018Operator}. However, for fixed parameters, the minimal period does not necessarily increase as the number of qubits increases. We observe this non-monotonic dependence on system size across the Clifford parameter sets considered. 

In the non-Clifford regime, our framework identifies both exactly recurrent and nonrecurrent dynamics. In particular, for certain parameter values examined in Eq.~\eqref{eq:nonclifford_parameter_scan}, our arithmetic criterion proves that
\begin{equation}
\label{eq:infinite_projective_order}
U_F^n\not\propto I_{2^N}
\qquad
\text{for any }n\in\mathbb{N},
\end{equation}
so these Floquet unitaries have infinite period. We also identify a non-Clifford family that satisfies Eq.~\eqref{eq:analytical_recurrence_result} with a minimal recurrence period \(n_{\mathrm{rec}}=2s\) and therefore has finite period. For this family, the minimal period is independent of system size, in contrast with the system-size-dependent periods found for the Clifford parameter sets above. We analytically prove recurrences for this family of parameters. Taken together, these results show that non-Cliffordness alone does not determine whether exact recurrence occurs. Although every \(N\)-qubit Clifford unitary exhibits an exact recurrence up to a global phase, our results show that such recurrences are not restricted to Clifford dynamics. The non-Clifford family identified here also satisfies \(U_F^{n_{\mathrm{rec}}}=\tau I\).

We also studied magic generation during non-Clifford recurrent dynamics. Starting from stabilizer states, these circuits generate magic at intermediate times before returning to their initial states. The period of the magic
dynamics divides the recurrence period of the Floquet unitary such that magic vanishes when the state returns at \(n_{\mathrm{rec}}\). Shorter magic revivals have also been observed in Ref.~\cite{jha2026magic}, where the authors identified intermediate Clifford points at which magic vanishes. Our infinite family of recurrent, magic-generating parameter sets establishes that exact recurrence can coexist with non-Clifford evolution and revivals in magic is the consequence of the recurrence dynamics. The dynamics leave the stabilizer sector before returning, so their intermediate behavior cannot be captured by stabilizer simulation alone. These circuits therefore provide a controlled means of generating magic as a resource for quantum computation. By selecting the Hamiltonian parameters, we can tune both the amount of magic generated and the time for which it persists before revival. 

The known recurrence time also makes these dynamics promising for benchmarking quantum computers. Fischer \textit{et al.} used non-Clifford kicked Ising circuits on a 91-qubit processor as hardware benchmarks, comparing measured autocorrelation with exact predictions at the dual-unitary point~\cite{Fischer2026Chaos}. A benchmark based on recurrence requires a mathematical criterion to identify which parameter sets recur and to determine their exact periods; our arithmetic framework provides this characterization. In particular, the exact recurrence period of the non-Clifford family identified here depends on a tunable Hamiltonian parameter, allowing predicted returns at progressively later times. Once \(U_F^{n_{\mathrm{rec}}}=\tau I\) is established, the ideal evolution also returns at every multiple of \(n_{\mathrm{rec}}\). These features provide exact reference points over a range of runtimes, allowing deviations from the predicted returns to probe errors accumulated during longer circuits. Because our framework derives these predictions from the Floquet unitary, it is independent of the hardware platform and could guide analogous tests on other devices. This prospect motivates the study of exact recurrences in other driven systems. Alongside proposed uses of recurrences in Hamiltonian-parameter estimation~\cite{zou_2025_enhancing} and quantum key distribution~\cite{SreeramSanthanam2026}, these applications show that exact recurrence is more than an isolated feature of selected parameter values. It can provide a basis for quantum protocols with potential advantages across distinct settings.

\begin{acknowledgments}
This work was supported in part by the Natural Sciences and Engineering Research Council of Canada (NSERC). We acknowledge fruitful discussions with Sanchit Srivastava and Yash Totani.  Wilfrid Laurier University and the University of Waterloo are located in the traditional territory of the Neutral, Anishnawbe and Haudenosaunee peoples. We thank them for allowing us to conduct this research on their land. 

The authors used ChatGPT (Sol 5.6) used to improve clarity and presentation of the manuscript, identifying references, formulating and checking few mathematical arguments, and developing and debugging the GAP implementation. These activities were carried out under the authors’ direction. The authors take full responsibility for the content, results, and conclusions of this work.
\end{acknowledgments}



\section*{Data Availability}

Codes and data are available upon request.  


\bibliography{ref}

\begin{thebibliography}{76}%
\makeatletter
\providecommand \@ifxundefined [1]{%
 \@ifx{#1\undefined}
}%
\providecommand \@ifnum [1]{%
 \ifnum #1\expandafter \@firstoftwo
 \else \expandafter \@secondoftwo
 \fi
}%
\providecommand \@ifx [1]{%
 \ifx #1\expandafter \@firstoftwo
 \else \expandafter \@secondoftwo
 \fi
}%
\providecommand \natexlab [1]{#1}%
\providecommand \enquote  [1]{``#1''}%
\providecommand \bibnamefont  [1]{#1}%
\providecommand \bibfnamefont [1]{#1}%
\providecommand \citenamefont [1]{#1}%
\providecommand \href@noop [0]{\@secondoftwo}%
\providecommand \href [0]{\begingroup \@sanitize@url \@href}%
\providecommand \@href[1]{\@@startlink{#1}\@@href}%
\providecommand \@@href[1]{\endgroup#1\@@endlink}%
\providecommand \@sanitize@url [0]{\catcode `\\12\catcode `\$12\catcode `\&12\catcode `\#12\catcode `\^12\catcode `\_12\catcode `\%12\relax}%
\providecommand \@@startlink[1]{}%
\providecommand \@@endlink[0]{}%
\providecommand \url  [0]{\begingroup\@sanitize@url \@url }%
\providecommand \@url [1]{\endgroup\@href {#1}{\urlprefix }}%
\providecommand \urlprefix  [0]{URL }%
\providecommand \Eprint [0]{\href }%
\providecommand \doibase [0]{https://doi.org/}%
\providecommand \selectlanguage [0]{\@gobble}%
\providecommand \bibinfo  [0]{\@secondoftwo}%
\providecommand \bibfield  [0]{\@secondoftwo}%
\providecommand \translation [1]{[#1]}%
\providecommand \BibitemOpen [0]{}%
\providecommand \bibitemStop [0]{}%
\providecommand \bibitemNoStop [0]{.\EOS\space}%
\providecommand \EOS [0]{\spacefactor3000\relax}%
\providecommand \BibitemShut  [1]{\csname bibitem#1\endcsname}%
\let\auto@bib@innerbib\@empty
\bibitem [{\citenamefont {Shirley}(1965)}]{Shirley1965}%
  \BibitemOpen
  \bibfield  {author} {\bibinfo {author} {\bibfnamefont {J.~H.}\ \bibnamefont {Shirley}},\ }\bibfield  {title} {\bibinfo {title} {Solution of the schrödinger equation with a hamiltonian periodic in time},\ }\href {https://doi.org/10.1103/PhysRev.138.B979} {\bibfield  {journal} {\bibinfo  {journal} {Physical Review}\ }\textbf {\bibinfo {volume} {138}},\ \bibinfo {pages} {B979} (\bibinfo {year} {1965})}\BibitemShut {NoStop}%
\bibitem [{\citenamefont {Anand}\ \emph {et~al.}(2024)\citenamefont {Anand}, \citenamefont {Davis},\ and\ \citenamefont {Ghose}}]{anand_davis_ghose_2024}%
  \BibitemOpen
  \bibfield  {author} {\bibinfo {author} {\bibfnamefont {A.}~\bibnamefont {Anand}}, \bibinfo {author} {\bibfnamefont {J.}~\bibnamefont {Davis}},\ and\ \bibinfo {author} {\bibfnamefont {S.}~\bibnamefont {Ghose}},\ }\bibfield  {title} {\bibinfo {title} {Quantum recurrences in the kicked top},\ }\href {https://doi.org/10.1103/PhysRevResearch.6.023120} {\bibfield  {journal} {\bibinfo  {journal} {Physical Review Research}\ }\textbf {\bibinfo {volume} {6}},\ \bibinfo {pages} {023120} (\bibinfo {year} {2024})}\BibitemShut {NoStop}%
\bibitem [{\citenamefont {Anand}\ \emph {et~al.}(2026)\citenamefont {Anand}, \citenamefont {Valluri}, \citenamefont {Davis},\ and\ \citenamefont {Ghose}}]{anand2026arithmetic}%
  \BibitemOpen
  \bibfield  {author} {\bibinfo {author} {\bibfnamefont {A.}~\bibnamefont {Anand}}, \bibinfo {author} {\bibfnamefont {D.}~\bibnamefont {Valluri}}, \bibinfo {author} {\bibfnamefont {J.}~\bibnamefont {Davis}},\ and\ \bibinfo {author} {\bibfnamefont {S.}~\bibnamefont {Ghose}},\ }\bibfield  {title} {\bibinfo {title} {Quantum recurrences and the arithmetic of {F}loquet dynamics},\ }\href {https://doi.org/10.22331/q-2026-04-20-2074} {\bibfield  {journal} {\bibinfo  {journal} {Quantum}\ }\textbf {\bibinfo {volume} {10}},\ \bibinfo {pages} {2074} (\bibinfo {year} {2026})}\BibitemShut {NoStop}%
\bibitem [{\citenamefont {D'Alessio}\ and\ \citenamefont {Rigol}(2014)}]{dalessio_rigol_2014}%
  \BibitemOpen
  \bibfield  {author} {\bibinfo {author} {\bibfnamefont {L.}~\bibnamefont {D'Alessio}}\ and\ \bibinfo {author} {\bibfnamefont {M.}~\bibnamefont {Rigol}},\ }\bibfield  {title} {\bibinfo {title} {Long-time behavior of isolated periodically driven interacting lattice systems},\ }\href {https://doi.org/10.1103/PhysRevX.4.041048} {\bibfield  {journal} {\bibinfo  {journal} {Physical Review X}\ }\textbf {\bibinfo {volume} {4}},\ \bibinfo {pages} {041048} (\bibinfo {year} {2014})}\BibitemShut {NoStop}%
\bibitem [{\citenamefont {Sommers}\ \emph {et~al.}(2023)\citenamefont {Sommers}, \citenamefont {Huse},\ and\ \citenamefont {Gullans}}]{sommers2023crystalline}%
  \BibitemOpen
  \bibfield  {author} {\bibinfo {author} {\bibfnamefont {G.~M.}\ \bibnamefont {Sommers}}, \bibinfo {author} {\bibfnamefont {D.~A.}\ \bibnamefont {Huse}},\ and\ \bibinfo {author} {\bibfnamefont {M.~J.}\ \bibnamefont {Gullans}},\ }\bibfield  {title} {\bibinfo {title} {Crystalline quantum circuits},\ }\href {https://doi.org/10.1103/PRXQuantum.4.030313} {\bibfield  {journal} {\bibinfo  {journal} {PRX Quantum}\ }\textbf {\bibinfo {volume} {4}},\ \bibinfo {pages} {030313} (\bibinfo {year} {2023})}\BibitemShut {NoStop}%
\bibitem [{\citenamefont {Izrailev}\ and\ \citenamefont {Shepelyanskii}(1980)}]{Izrailev_Shepelyanskii_1980}%
  \BibitemOpen
  \bibfield  {author} {\bibinfo {author} {\bibfnamefont {F.~M.}\ \bibnamefont {Izrailev}}\ and\ \bibinfo {author} {\bibfnamefont {D.~L.}\ \bibnamefont {Shepelyanskii}},\ }\bibfield  {title} {\bibinfo {title} {Quantum resonance for a rotator in a nonlinear periodic field},\ }\href {https://doi.org/10.1007/BF01029131} {\bibfield  {journal} {\bibinfo  {journal} {Theoretical and Mathematical Physics}\ }\textbf {\bibinfo {volume} {43}},\ \bibinfo {pages} {553–561} (\bibinfo {year} {1980})}\BibitemShut {NoStop}%
\bibitem [{\citenamefont {Fishman}\ \emph {et~al.}(1982)\citenamefont {Fishman}, \citenamefont {Grempel},\ and\ \citenamefont {Prange}}]{fishman_rotor_anderson_1982}%
  \BibitemOpen
  \bibfield  {author} {\bibinfo {author} {\bibfnamefont {S.}~\bibnamefont {Fishman}}, \bibinfo {author} {\bibfnamefont {D.~R.}\ \bibnamefont {Grempel}},\ and\ \bibinfo {author} {\bibfnamefont {R.~E.}\ \bibnamefont {Prange}},\ }\bibfield  {title} {\bibinfo {title} {Chaos, quantum recurrences, and anderson localization},\ }\href {https://doi.org/10.1103/PhysRevLett.49.509} {\bibfield  {journal} {\bibinfo  {journal} {Physical Review Letter}\ }\textbf {\bibinfo {volume} {49}},\ \bibinfo {pages} {509} (\bibinfo {year} {1982})}\BibitemShut {NoStop}%
\bibitem [{\citenamefont {Wang}\ \emph {et~al.}(2004)\citenamefont {Wang}, \citenamefont {Ghose}, \citenamefont {Sanders},\ and\ \citenamefont {Hu}}]{wang_ghose_entg_2004}%
  \BibitemOpen
  \bibfield  {author} {\bibinfo {author} {\bibfnamefont {X.}~\bibnamefont {Wang}}, \bibinfo {author} {\bibfnamefont {S.}~\bibnamefont {Ghose}}, \bibinfo {author} {\bibfnamefont {B.~C.}\ \bibnamefont {Sanders}},\ and\ \bibinfo {author} {\bibfnamefont {B.}~\bibnamefont {Hu}},\ }\bibfield  {title} {\bibinfo {title} {Entanglement as a signature of quantum chaos},\ }\href {https://doi.org/10.1103/PhysRevE.70.016217} {\bibfield  {journal} {\bibinfo  {journal} {Phys. Rev. E}\ }\textbf {\bibinfo {volume} {70}},\ \bibinfo {pages} {016217} (\bibinfo {year} {2004})}\BibitemShut {NoStop}%
\bibitem [{\citenamefont {Fortin}\ and\ \citenamefont {Lombardi}(2019)}]{Fortin_Lombardi_2019}%
  \BibitemOpen
  \bibfield  {author} {\bibinfo {author} {\bibfnamefont {S.}~\bibnamefont {Fortin}}\ and\ \bibinfo {author} {\bibfnamefont {O.}~\bibnamefont {Lombardi}},\ }\bibfield  {title} {\bibinfo {title} {The correspondence principle and the understanding of decoherence},\ }\href {https://doi.org/10.1007/s10701-019-00309-4} {\bibfield  {journal} {\bibinfo  {journal} {Foundations of Physics}\ }\textbf {\bibinfo {volume} {49}},\ \bibinfo {pages} {1372–1393} (\bibinfo {year} {2019})}\BibitemShut {NoStop}%
\bibitem [{\citenamefont {Kaufman}\ \emph {et~al.}(2016)\citenamefont {Kaufman}, \citenamefont {Tai}, \citenamefont {Lukin}, \citenamefont {Rispoli}, \citenamefont {Schittko}, \citenamefont {Preiss},\ and\ \citenamefont {Greiner}}]{kaufman_2016_quantum}%
  \BibitemOpen
  \bibfield  {author} {\bibinfo {author} {\bibfnamefont {A.}~\bibnamefont {Kaufman}}, \bibinfo {author} {\bibfnamefont {M.~E.}\ \bibnamefont {Tai}}, \bibinfo {author} {\bibfnamefont {A.}~\bibnamefont {Lukin}}, \bibinfo {author} {\bibfnamefont {M.}~\bibnamefont {Rispoli}}, \bibinfo {author} {\bibfnamefont {R.}~\bibnamefont {Schittko}}, \bibinfo {author} {\bibfnamefont {P.~M.}\ \bibnamefont {Preiss}},\ and\ \bibinfo {author} {\bibfnamefont {M.}~\bibnamefont {Greiner}},\ }\bibfield  {title} {\bibinfo {title} {Quantum thermalization through entanglement in an isolated many-body system},\ }\href {https://doi.org/10.1126/science.aaf6725} {\bibfield  {journal} {\bibinfo  {journal} {Science}\ }\textbf {\bibinfo {volume} {353}},\ \bibinfo {pages} {794} (\bibinfo {year} {2016})}\BibitemShut {NoStop}%
\bibitem [{\citenamefont {Fleckenstein}\ and\ \citenamefont {Bukov}(2021)}]{bukov_many_body_2021}%
  \BibitemOpen
  \bibfield  {author} {\bibinfo {author} {\bibfnamefont {C.}~\bibnamefont {Fleckenstein}}\ and\ \bibinfo {author} {\bibfnamefont {M.}~\bibnamefont {Bukov}},\ }\bibfield  {title} {\bibinfo {title} {Prethermalization and thermalization in periodically driven many-body systems away from the high-frequency limit},\ }\href {https://doi.org/10.1103/PhysRevB.103.L140302} {\bibfield  {journal} {\bibinfo  {journal} {Physical Review B}\ }\textbf {\bibinfo {volume} {103}},\ \bibinfo {pages} {L140302} (\bibinfo {year} {2021})}\BibitemShut {NoStop}%
\bibitem [{\citenamefont {Kumari}\ and\ \citenamefont {Ghose}(2018)}]{Kumari_Ghose_vicinity_2018}%
  \BibitemOpen
  \bibfield  {author} {\bibinfo {author} {\bibfnamefont {M.}~\bibnamefont {Kumari}}\ and\ \bibinfo {author} {\bibfnamefont {S.}~\bibnamefont {Ghose}},\ }\bibfield  {title} {\bibinfo {title} {Quantum-classical correspondence in the vicinity of periodic orbits},\ }\href {https://doi.org/10.1103/PhysRevE.97.052209} {\bibfield  {journal} {\bibinfo  {journal} {Physical Review E}\ }\textbf {\bibinfo {volume} {97}},\ \bibinfo {pages} {052209} (\bibinfo {year} {2018})}\BibitemShut {NoStop}%
\bibitem [{\citenamefont {Pai}\ and\ \citenamefont {Pretko}(2019)}]{Pai_and_Pretko_scar_2019}%
  \BibitemOpen
  \bibfield  {author} {\bibinfo {author} {\bibfnamefont {S.}~\bibnamefont {Pai}}\ and\ \bibinfo {author} {\bibfnamefont {M.}~\bibnamefont {Pretko}},\ }\bibfield  {title} {\bibinfo {title} {Dynamical scar states in driven fracton systems},\ }\href {https://doi.org/10.1103/PhysRevLett.123.136401} {\bibfield  {journal} {\bibinfo  {journal} {Physical Review Letter}\ }\textbf {\bibinfo {volume} {123}},\ \bibinfo {pages} {136401} (\bibinfo {year} {2019})}\BibitemShut {NoStop}%
\bibitem [{\citenamefont {Mukherjee}\ \emph {et~al.}(2020)\citenamefont {Mukherjee}, \citenamefont {Nandy}, \citenamefont {Sen}, \citenamefont {Sen},\ and\ \citenamefont {Sengupta}}]{Sen_scar_2020}%
  \BibitemOpen
  \bibfield  {author} {\bibinfo {author} {\bibfnamefont {B.}~\bibnamefont {Mukherjee}}, \bibinfo {author} {\bibfnamefont {S.}~\bibnamefont {Nandy}}, \bibinfo {author} {\bibfnamefont {A.}~\bibnamefont {Sen}}, \bibinfo {author} {\bibfnamefont {D.}~\bibnamefont {Sen}},\ and\ \bibinfo {author} {\bibfnamefont {K.}~\bibnamefont {Sengupta}},\ }\bibfield  {title} {\bibinfo {title} {Collapse and revival of quantum many-body scars via floquet engineering},\ }\href {https://doi.org/10.1103/PhysRevB.101.245107} {\bibfield  {journal} {\bibinfo  {journal} {Physical Review B}\ }\textbf {\bibinfo {volume} {101}},\ \bibinfo {pages} {245107} (\bibinfo {year} {2020})}\BibitemShut {NoStop}%
\bibitem [{\citenamefont {Mizuta}\ \emph {et~al.}(2020)\citenamefont {Mizuta}, \citenamefont {Takasan},\ and\ \citenamefont {Kawakami}}]{Norio_scars_2020}%
  \BibitemOpen
  \bibfield  {author} {\bibinfo {author} {\bibfnamefont {K.}~\bibnamefont {Mizuta}}, \bibinfo {author} {\bibfnamefont {K.}~\bibnamefont {Takasan}},\ and\ \bibinfo {author} {\bibfnamefont {N.}~\bibnamefont {Kawakami}},\ }\bibfield  {title} {\bibinfo {title} {Exact floquet quantum many-body scars under rydberg blockade},\ }\href {https://doi.org/10.1103/PhysRevResearch.2.033284} {\bibfield  {journal} {\bibinfo  {journal} {Physical Review Research}\ }\textbf {\bibinfo {volume} {2}},\ \bibinfo {pages} {033284} (\bibinfo {year} {2020})}\BibitemShut {NoStop}%
\bibitem [{\citenamefont {Roy}\ and\ \citenamefont {Harper}(2017)}]{roy_2017_floquet}%
  \BibitemOpen
  \bibfield  {author} {\bibinfo {author} {\bibfnamefont {R.}~\bibnamefont {Roy}}\ and\ \bibinfo {author} {\bibfnamefont {F.}~\bibnamefont {Harper}},\ }\bibfield  {title} {\bibinfo {title} {Floquet topological phases with symmetry in all dimensions},\ }\href {https://journals.aps.org/prb/abstract/10.1103/PhysRevB.95.19512810.1103/physrevb.95.195128} {\bibfield  {journal} {\bibinfo  {journal} {Physical Review B}\ }\textbf {\bibinfo {volume} {95}} (\bibinfo {year} {2017})}\BibitemShut {NoStop}%
\bibitem [{\citenamefont {Sacha}\ and\ \citenamefont {Zakrzewski}(2017)}]{sacha_2017_time}%
  \BibitemOpen
  \bibfield  {author} {\bibinfo {author} {\bibfnamefont {K.}~\bibnamefont {Sacha}}\ and\ \bibinfo {author} {\bibfnamefont {J.}~\bibnamefont {Zakrzewski}},\ }\bibfield  {title} {\bibinfo {title} {Time crystals: a review},\ }\href {https://doi.org/10.1088/1361-6633/aa8b38} {\bibfield  {journal} {\bibinfo  {journal} {Reports on Progress in Physics}\ }\textbf {\bibinfo {volume} {81}},\ \bibinfo {pages} {016401} (\bibinfo {year} {2017})}\BibitemShut {NoStop}%
\bibitem [{\citenamefont {Khemani}\ \emph {et~al.}(2019)\citenamefont {Khemani}, \citenamefont {Moessner},\ and\ \citenamefont {Sondhi}}]{khemani_2019_a}%
  \BibitemOpen
  \bibfield  {author} {\bibinfo {author} {\bibfnamefont {V.}~\bibnamefont {Khemani}}, \bibinfo {author} {\bibfnamefont {R.}~\bibnamefont {Moessner}},\ and\ \bibinfo {author} {\bibfnamefont {S.~L.}\ \bibnamefont {Sondhi}},\ }\bibfield  {title} {\bibinfo {title} {A brief history of time crystals},\ }\href {https://arxiv.org/abs/1910.10745} {\bibfield  {journal} {\bibinfo  {journal} {arXiv}\ } (\bibinfo {year} {2019})}\BibitemShut {NoStop}%
\bibitem [{\citenamefont {Floquet}(1883)}]{gfloquet_1883}%
  \BibitemOpen
  \bibfield  {author} {\bibinfo {author} {\bibfnamefont {G.}~\bibnamefont {Floquet}},\ }\bibfield  {title} {\bibinfo {title} {Sur les équations différentielles linéaires à coefficients périodiques},\ }\href {https://doi.org/10.24033/asens.220} {\bibfield  {journal} {\bibinfo  {journal} {Annales Scientifiques De L Ecole Normale Superieure}\ }\textbf {\bibinfo {volume} {12}},\ \bibinfo {pages} {47} (\bibinfo {year} {1883})}\BibitemShut {NoStop}%
\bibitem [{\citenamefont {Grifoni}\ and\ \citenamefont {Hänggi}(1998)}]{grifoni_driven_1998}%
  \BibitemOpen
  \bibfield  {author} {\bibinfo {author} {\bibfnamefont {M.}~\bibnamefont {Grifoni}}\ and\ \bibinfo {author} {\bibfnamefont {P.}~\bibnamefont {Hänggi}},\ }\bibfield  {title} {\bibinfo {title} {Driven quantum tunneling},\ }\href {https://doi.org/10.1016/s0370-1573(98)00022-2} {\bibfield  {journal} {\bibinfo  {journal} {Physics Reports}\ }\textbf {\bibinfo {volume} {304}},\ \bibinfo {pages} {229} (\bibinfo {year} {1998})}\BibitemShut {NoStop}%
\bibitem [{\citenamefont {Chaudhury}\ \emph {et~al.}(2009)\citenamefont {Chaudhury}, \citenamefont {Smith}, \citenamefont {Anderson}, \citenamefont {Ghose},\ and\ \citenamefont {Jessen}}]{Chaudhury_2009}%
  \BibitemOpen
  \bibfield  {author} {\bibinfo {author} {\bibfnamefont {S.}~\bibnamefont {Chaudhury}}, \bibinfo {author} {\bibfnamefont {A.}~\bibnamefont {Smith}}, \bibinfo {author} {\bibfnamefont {B.~E.}\ \bibnamefont {Anderson}}, \bibinfo {author} {\bibfnamefont {S.}~\bibnamefont {Ghose}},\ and\ \bibinfo {author} {\bibfnamefont {P.~S.}\ \bibnamefont {Jessen}},\ }\bibfield  {title} {\bibinfo {title} {Quantum signatures of chaos in a kicked top},\ }\href {https://doi.org/10.1038/nature08396} {\bibfield  {journal} {\bibinfo  {journal} {Nature}\ }\textbf {\bibinfo {volume} {461}},\ \bibinfo {pages} {768–771} (\bibinfo {year} {2009})}\BibitemShut {NoStop}%
\bibitem [{\citenamefont {Neill}\ \emph {et~al.}(2016)\citenamefont {Neill}, \citenamefont {Roushan}, \citenamefont {Fang}, \citenamefont {Chen}, \citenamefont {Kolodrubetz}, \citenamefont {Chen}, \citenamefont {Megrant}, \citenamefont {Barends}, \citenamefont {Campbell}, \citenamefont {Chiaro} \emph {et~al.}}]{Neill_2016}%
  \BibitemOpen
  \bibfield  {author} {\bibinfo {author} {\bibfnamefont {C.}~\bibnamefont {Neill}}, \bibinfo {author} {\bibfnamefont {P.}~\bibnamefont {Roushan}}, \bibinfo {author} {\bibfnamefont {M.}~\bibnamefont {Fang}}, \bibinfo {author} {\bibfnamefont {Y.}~\bibnamefont {Chen}}, \bibinfo {author} {\bibfnamefont {M.}~\bibnamefont {Kolodrubetz}}, \bibinfo {author} {\bibfnamefont {Z.}~\bibnamefont {Chen}}, \bibinfo {author} {\bibfnamefont {A.}~\bibnamefont {Megrant}}, \bibinfo {author} {\bibfnamefont {R.}~\bibnamefont {Barends}}, \bibinfo {author} {\bibfnamefont {B.}~\bibnamefont {Campbell}}, \bibinfo {author} {\bibfnamefont {B.}~\bibnamefont {Chiaro}}, \emph {et~al.},\ }\bibfield  {title} {\bibinfo {title} {Ergodic dynamics and thermalization in an isolated quantum system},\ }\href@noop {} {\bibfield  {journal} {\bibinfo  {journal} {Nature Physics}\ }\textbf {\bibinfo {volume} {12}},\ \bibinfo {pages} {1037} (\bibinfo {year} {2016})}\BibitemShut {NoStop}%
\bibitem [{\citenamefont {Kanem}\ \emph {et~al.}(2007)\citenamefont {Kanem}, \citenamefont {Maneshi}, \citenamefont {Partlow}, \citenamefont {Spanner},\ and\ \citenamefont {Steinberg}}]{kanem_2007_observation}%
  \BibitemOpen
  \bibfield  {author} {\bibinfo {author} {\bibfnamefont {J.~F.}\ \bibnamefont {Kanem}}, \bibinfo {author} {\bibfnamefont {S.}~\bibnamefont {Maneshi}}, \bibinfo {author} {\bibfnamefont {M.}~\bibnamefont {Partlow}}, \bibinfo {author} {\bibfnamefont {M.}~\bibnamefont {Spanner}},\ and\ \bibinfo {author} {\bibfnamefont {A.~M.}\ \bibnamefont {Steinberg}},\ }\bibfield  {title} {\bibinfo {title} {Observation of high-order quantum resonances in the kicked rotor},\ }\bibfield  {journal} {\bibinfo  {journal} {Physical Review Letters}\ }\textbf {\bibinfo {volume} {98}},\ \href {https://doi.org/10.1103/physrevlett.98.083004} {10.1103/physrevlett.98.083004} (\bibinfo {year} {2007})\BibitemShut {NoStop}%
\bibitem [{\citenamefont {Krithika}\ \emph {et~al.}(2019)\citenamefont {Krithika}, \citenamefont {Anjusha}, \citenamefont {Bhosale},\ and\ \citenamefont {Mahesh}}]{Krithika_NMR_QKT_2019}%
  \BibitemOpen
  \bibfield  {author} {\bibinfo {author} {\bibfnamefont {V.~R.}\ \bibnamefont {Krithika}}, \bibinfo {author} {\bibfnamefont {V.~S.}\ \bibnamefont {Anjusha}}, \bibinfo {author} {\bibfnamefont {U.~T.}\ \bibnamefont {Bhosale}},\ and\ \bibinfo {author} {\bibfnamefont {T.~S.}\ \bibnamefont {Mahesh}},\ }\bibfield  {title} {\bibinfo {title} {{NMR} studies of quantum chaos in a two-qubit kicked top},\ }\href {https://doi.org/10.1103/PhysRevE.99.032219} {\bibfield  {journal} {\bibinfo  {journal} {Phys. Rev. E}\ }\textbf {\bibinfo {volume} {99}},\ \bibinfo {pages} {032219} (\bibinfo {year} {2019})}\BibitemShut {NoStop}%
\bibitem [{\citenamefont {Lignier}\ \emph {et~al.}(2007)\citenamefont {Lignier}, \citenamefont {Sias}, \citenamefont {Ciampini}, \citenamefont {Singh}, \citenamefont {Zenesini}, \citenamefont {Morsch},\ and\ \citenamefont {Arimondo}}]{Lignier2007}%
  \BibitemOpen
  \bibfield  {author} {\bibinfo {author} {\bibfnamefont {H.}~\bibnamefont {Lignier}}, \bibinfo {author} {\bibfnamefont {C.}~\bibnamefont {Sias}}, \bibinfo {author} {\bibfnamefont {D.}~\bibnamefont {Ciampini}}, \bibinfo {author} {\bibfnamefont {Y.}~\bibnamefont {Singh}}, \bibinfo {author} {\bibfnamefont {A.}~\bibnamefont {Zenesini}}, \bibinfo {author} {\bibfnamefont {O.}~\bibnamefont {Morsch}},\ and\ \bibinfo {author} {\bibfnamefont {E.}~\bibnamefont {Arimondo}},\ }\bibfield  {title} {\bibinfo {title} {Dynamical control of matter-wave tunneling in periodic potentials},\ }\href {https://doi.org/10.1103/PhysRevLett.99.220403} {\bibfield  {journal} {\bibinfo  {journal} {Physical Review Letters}\ }\textbf {\bibinfo {volume} {99}},\ \bibinfo {pages} {220403} (\bibinfo {year} {2007})}\BibitemShut {NoStop}%
\bibitem [{\citenamefont {Jotzu}\ \emph {et~al.}(2014)\citenamefont {Jotzu} \emph {et~al.}}]{Jotzu2014}%
  \BibitemOpen
  \bibfield  {author} {\bibinfo {author} {\bibfnamefont {G.}~\bibnamefont {Jotzu}} \emph {et~al.},\ }\bibfield  {title} {\bibinfo {title} {Experimental realization of the topological {Haldane} model with ultracold fermions},\ }\href {https://doi.org/10.1038/nature13915} {\bibfield  {journal} {\bibinfo  {journal} {Nature}\ }\textbf {\bibinfo {volume} {515}},\ \bibinfo {pages} {237} (\bibinfo {year} {2014})}\BibitemShut {NoStop}%
\bibitem [{\citenamefont {Mi}\ \emph {et~al.}(2022)\citenamefont {Mi} \emph {et~al.}}]{Mi2022TimeCrystal}%
  \BibitemOpen
  \bibfield  {author} {\bibinfo {author} {\bibfnamefont {X.}~\bibnamefont {Mi}} \emph {et~al.},\ }\bibfield  {title} {\bibinfo {title} {Time-crystalline eigenstate order on a quantum processor},\ }\href {https://doi.org/10.1038/s41586-021-04257-w} {\bibfield  {journal} {\bibinfo  {journal} {Nature}\ }\textbf {\bibinfo {volume} {601}},\ \bibinfo {pages} {531} (\bibinfo {year} {2022})}\BibitemShut {NoStop}%
\bibitem [{\citenamefont {Zhang}\ \emph {et~al.}(2022)\citenamefont {Zhang} \emph {et~al.}}]{Zhang2022FloquetSPT}%
  \BibitemOpen
  \bibfield  {author} {\bibinfo {author} {\bibfnamefont {X.}~\bibnamefont {Zhang}} \emph {et~al.},\ }\bibfield  {title} {\bibinfo {title} {Digital quantum simulation of {Floquet} symmetry-protected topological phases},\ }\href {https://doi.org/10.1038/s41586-022-04854-3} {\bibfield  {journal} {\bibinfo  {journal} {Nature}\ }\textbf {\bibinfo {volume} {607}},\ \bibinfo {pages} {468} (\bibinfo {year} {2022})}\BibitemShut {NoStop}%
\bibitem [{\citenamefont {Kim}\ \emph {et~al.}(2023{\natexlab{a}})\citenamefont {Kim} \emph {et~al.}}]{Kim2023Ising}%
  \BibitemOpen
  \bibfield  {author} {\bibinfo {author} {\bibfnamefont {Y.}~\bibnamefont {Kim}} \emph {et~al.},\ }\bibfield  {title} {\bibinfo {title} {Evidence for the utility of quantum computing before fault tolerance},\ }\href {https://doi.org/10.1038/s41586-023-06096-3} {\bibfield  {journal} {\bibinfo  {journal} {Nature}\ }\textbf {\bibinfo {volume} {618}},\ \bibinfo {pages} {500} (\bibinfo {year} {2023}{\natexlab{a}})}\BibitemShut {NoStop}%
\bibitem [{\citenamefont {Liao}\ \emph {et~al.}(2023)\citenamefont {Liao}, \citenamefont {Wang}, \citenamefont {Zhou}, \citenamefont {Zhang},\ and\ \citenamefont {Xiang}}]{liao2023simulation}%
  \BibitemOpen
  \bibfield  {author} {\bibinfo {author} {\bibfnamefont {H.-J.}\ \bibnamefont {Liao}}, \bibinfo {author} {\bibfnamefont {K.}~\bibnamefont {Wang}}, \bibinfo {author} {\bibfnamefont {Z.-S.}\ \bibnamefont {Zhou}}, \bibinfo {author} {\bibfnamefont {P.}~\bibnamefont {Zhang}},\ and\ \bibinfo {author} {\bibfnamefont {T.}~\bibnamefont {Xiang}},\ }\href {https://doi.org/10.48550/arXiv.2308.03082} {\bibinfo {title} {Simulation of {IBM}'s kicked {Ising} experiment with projected entangled pair operator}} (\bibinfo {year} {2023}),\ \Eprint {https://arxiv.org/abs/2308.03082} {arXiv:2308.03082 [quant-ph]} \BibitemShut {NoStop}%
\bibitem [{\citenamefont {Tindall}\ \emph {et~al.}(2024)\citenamefont {Tindall}, \citenamefont {Fishman}, \citenamefont {Stoudenmire},\ and\ \citenamefont {Sels}}]{tindall2024efficient}%
  \BibitemOpen
  \bibfield  {author} {\bibinfo {author} {\bibfnamefont {J.}~\bibnamefont {Tindall}}, \bibinfo {author} {\bibfnamefont {M.}~\bibnamefont {Fishman}}, \bibinfo {author} {\bibfnamefont {M.}~\bibnamefont {Stoudenmire}},\ and\ \bibinfo {author} {\bibfnamefont {D.}~\bibnamefont {Sels}},\ }\bibfield  {title} {\bibinfo {title} {Efficient tensor network simulation of {IBM}'s eagle kicked {Ising} experiment},\ }\href {https://doi.org/10.1103/PRXQuantum.5.010308} {\bibfield  {journal} {\bibinfo  {journal} {PRX Quantum}\ }\textbf {\bibinfo {volume} {5}},\ \bibinfo {pages} {010308} (\bibinfo {year} {2024})}\BibitemShut {NoStop}%
\bibitem [{\citenamefont {Hogg}\ and\ \citenamefont {Huberman}(1982)}]{hogg_1982_recurrence}%
  \BibitemOpen
  \bibfield  {author} {\bibinfo {author} {\bibfnamefont {T.}~\bibnamefont {Hogg}}\ and\ \bibinfo {author} {\bibfnamefont {B.~A.}\ \bibnamefont {Huberman}},\ }\bibfield  {title} {\bibinfo {title} {Recurrence phenomena in quantum dynamics},\ }\href {https://doi.org/10.1103/physrevlett.48.711} {\bibfield  {journal} {\bibinfo  {journal} {Physical Review Letters}\ }\textbf {\bibinfo {volume} {48}},\ \bibinfo {pages} {711} (\bibinfo {year} {1982})}\BibitemShut {NoStop}%
\bibitem [{\citenamefont {Pandit}\ \emph {et~al.}(2022)\citenamefont {Pandit}, \citenamefont {Green}, \citenamefont {Alderete}, \citenamefont {Linke},\ and\ \citenamefont {Uzdin}}]{pandit_2022_bounds}%
  \BibitemOpen
  \bibfield  {author} {\bibinfo {author} {\bibfnamefont {T.}~\bibnamefont {Pandit}}, \bibinfo {author} {\bibfnamefont {A.~M.}\ \bibnamefont {Green}}, \bibinfo {author} {\bibfnamefont {C.~H.}\ \bibnamefont {Alderete}}, \bibinfo {author} {\bibfnamefont {N.~M.}\ \bibnamefont {Linke}},\ and\ \bibinfo {author} {\bibfnamefont {R.}~\bibnamefont {Uzdin}},\ }\bibfield  {title} {\bibinfo {title} {Bounds on the recurrence probability in periodically-driven quantum systems},\ }\href {https://doi.org/10.22331/q-2022-04-06-682} {\bibfield  {journal} {\bibinfo  {journal} {Quantum}\ }\textbf {\bibinfo {volume} {6}},\ \bibinfo {pages} {682} (\bibinfo {year} {2022})}\BibitemShut {NoStop}%
\bibitem [{\citenamefont {Oszmaniec}\ \emph {et~al.}(2024)\citenamefont {Oszmaniec}, \citenamefont {Kotowski}, \citenamefont {Horodecki},\ and\ \citenamefont {Hunter-Jones}}]{oszmaniec_2024_saturation}%
  \BibitemOpen
  \bibfield  {author} {\bibinfo {author} {\bibfnamefont {M.}~\bibnamefont {Oszmaniec}}, \bibinfo {author} {\bibfnamefont {M.}~\bibnamefont {Kotowski}}, \bibinfo {author} {\bibfnamefont {M.}~\bibnamefont {Horodecki}},\ and\ \bibinfo {author} {\bibfnamefont {N.}~\bibnamefont {Hunter-Jones}},\ }\bibfield  {title} {\bibinfo {title} {Saturation and recurrence of quantum complexity in random local quantum dynamics},\ }\href {https://journals.aps.org/prx/abstract/10.1103/PhysRevX.14.04106810.1103/physrevx.14.041068} {\bibfield  {journal} {\bibinfo  {journal} {Physical Review X}\ }\textbf {\bibinfo {volume} {14}} (\bibinfo {year} {2024})}\BibitemShut {NoStop}%
\bibitem [{\citenamefont {Poincaré}(1890)}]{poincar_1890_avantpropos}%
  \BibitemOpen
  \bibfield  {author} {\bibinfo {author} {\bibfnamefont {H.}~\bibnamefont {Poincaré}},\ }\bibfield  {title} {\bibinfo {title} {Avant-propos},\ }\href {https://doi.org/10.1007/bf02392505} {\bibfield  {journal} {\bibinfo  {journal} {Acta Mathematica}\ }\textbf {\bibinfo {volume} {13}},\ \bibinfo {pages} {VII} (\bibinfo {year} {1890})}\BibitemShut {NoStop}%
\bibitem [{\citenamefont {Bocchieri}\ and\ \citenamefont {Loinger}(1957)}]{bocchieri_1957_quantum}%
  \BibitemOpen
  \bibfield  {author} {\bibinfo {author} {\bibfnamefont {P.}~\bibnamefont {Bocchieri}}\ and\ \bibinfo {author} {\bibfnamefont {A.}~\bibnamefont {Loinger}},\ }\bibfield  {title} {\bibinfo {title} {Quantum recurrence theorem},\ }\href {https://journals.aps.org/pr/abstract/10.1103/PhysRev.107.337} {\bibfield  {journal} {\bibinfo  {journal} {Physical Review}\ }\textbf {\bibinfo {volume} {107}},\ \bibinfo {pages} {337} (\bibinfo {year} {1957})}\BibitemShut {NoStop}%
\bibitem [{\citenamefont {Gupta}\ and\ \citenamefont {Short}(2026)}]{Chaitanya2026recurrence}%
  \BibitemOpen
  \bibfield  {author} {\bibinfo {author} {\bibfnamefont {C.}~\bibnamefont {Gupta}}\ and\ \bibinfo {author} {\bibfnamefont {A.~J.}\ \bibnamefont {Short}},\ }\bibfield  {title} {\bibinfo {title} {Recurrence time for finite quantum systems},\ }\href {https://doi.org/10.1103/392k-bgpt} {\bibfield  {journal} {\bibinfo  {journal} {Phys. Rev. Lett.}\ }\textbf {\bibinfo {volume} {137}},\ \bibinfo {pages} {140201} (\bibinfo {year} {2026})}\BibitemShut {NoStop}%
\bibitem [{\citenamefont {Gimeno}\ and\ \citenamefont {Sotoca}(2017)}]{gimeno_upper_2017}%
  \BibitemOpen
  \bibfield  {author} {\bibinfo {author} {\bibfnamefont {V.}~\bibnamefont {Gimeno}}\ and\ \bibinfo {author} {\bibfnamefont {J.~M.}\ \bibnamefont {Sotoca}},\ }\bibfield  {title} {\bibinfo {title} {Upper bounds for the poincaré recurrence time in quantum mixed states},\ }\href {https://iopscience.iop.org/article/10.1088/1751-8121/aa67fe} {\bibfield  {journal} {\bibinfo  {journal} {Journal of Physics A Mathematical and Theoretical}\ }\textbf {\bibinfo {volume} {50}} (\bibinfo {year} {2017})}\BibitemShut {NoStop}%
\bibitem [{\citenamefont {Brown}\ and\ \citenamefont {Susskind}(2018)}]{suskind_complexity_2018}%
  \BibitemOpen
  \bibfield  {author} {\bibinfo {author} {\bibfnamefont {A.~R.}\ \bibnamefont {Brown}}\ and\ \bibinfo {author} {\bibfnamefont {L.}~\bibnamefont {Susskind}},\ }\bibfield  {title} {\bibinfo {title} {Second law of quantum complexity},\ }\href {https://doi.org/10.1103/PhysRevD.97.086015} {\bibfield  {journal} {\bibinfo  {journal} {Physical Review D}\ }\textbf {\bibinfo {volume} {97}},\ \bibinfo {pages} {086015} (\bibinfo {year} {2018})}\BibitemShut {NoStop}%
\bibitem [{\citenamefont {Freedman}(2024)}]{freedman_quantumrecurrent_2024}%
  \BibitemOpen
  \bibfield  {author} {\bibinfo {author} {\bibfnamefont {M.~H.}\ \bibnamefont {Freedman}},\ }\bibfield  {title} {\bibinfo {title} {Quantum detection of recurrent dynamics},\ }\href {https://doi.org/10.48550/arXiv.2407.16055} {\bibfield  {journal} {\bibinfo  {journal} {arXiv:2407.16055}\ } (\bibinfo {year} {2024})},\ \Eprint {https://arxiv.org/abs/2407.16055} {arXiv 2407.16055:2407.16055 [quant-ph]} \BibitemShut {NoStop}%
\bibitem [{\citenamefont {Levesque}\ and\ \citenamefont {Sourlas}(2025)}]{levesque_rec_time_2025}%
  \BibitemOpen
  \bibfield  {author} {\bibinfo {author} {\bibfnamefont {D.}~\bibnamefont {Levesque}}\ and\ \bibinfo {author} {\bibfnamefont {N.}~\bibnamefont {Sourlas}},\ }\bibfield  {title} {\bibinfo {title} {Time irreversibility in statistical mechanics},\ }\href {https://hal.science/hal-0515352510.1007/s10955-025-03467-0} {\bibfield  {journal} {\bibinfo  {journal} {Journal of Statistical Physics}\ }\textbf {\bibinfo {volume} {192}} (\bibinfo {year} {2025})}\BibitemShut {NoStop}%
\bibitem [{\citenamefont {Ropotenko}(2025)}]{ropotenko_the_2025}%
  \BibitemOpen
  \bibfield  {author} {\bibinfo {author} {\bibfnamefont {K.}~\bibnamefont {Ropotenko}},\ }\bibfield  {title} {\bibinfo {title} {The poincar\'e recurrence time for the de sitter space with dynamical chaos},\ }\href {https://arxiv.org/abs/0712.0993} {\bibfield  {journal} {\bibinfo  {journal} {arXiv}\ } (\bibinfo {year} {2025})}\BibitemShut {NoStop}%
\bibitem [{\citenamefont {Sharma}\ and\ \citenamefont {Bhosale}(2024)}]{bhosle_qkt__jan_2024}%
  \BibitemOpen
  \bibfield  {author} {\bibinfo {author} {\bibfnamefont {H.}~\bibnamefont {Sharma}}\ and\ \bibinfo {author} {\bibfnamefont {U.~T.}\ \bibnamefont {Bhosale}},\ }\bibfield  {title} {\bibinfo {title} {Exactly solvable dynamics and signatures of integrability in an infinite-range many-body floquet spin system},\ }\href {https://doi.org/10.1103/PhysRevB.109.014412} {\bibfield  {journal} {\bibinfo  {journal} {Physical Review B}\ }\textbf {\bibinfo {volume} {109}},\ \bibinfo {pages} {014412} (\bibinfo {year} {2024})}\BibitemShut {NoStop}%
\bibitem [{\citenamefont {Purohit}\ and\ \citenamefont {Bhosale}(2026)}]{Purohit2026DoubleKickedTop}%
  \BibitemOpen
  \bibfield  {author} {\bibinfo {author} {\bibfnamefont {A.~V.}\ \bibnamefont {Purohit}}\ and\ \bibinfo {author} {\bibfnamefont {U.~T.}\ \bibnamefont {Bhosale}},\ }\bibfield  {title} {\bibinfo {title} {{Floquet} resonances in the double kicked top},\ }\href {https://doi.org/10.1103/nk7g-ckb8} {\bibfield  {journal} {\bibinfo  {journal} {Physical Review E}\ }\textbf {\bibinfo {volume} {114}},\ \bibinfo {pages} {014214} (\bibinfo {year} {2026})}\BibitemShut {NoStop}%
\bibitem [{\citenamefont {von Keyserlingk}\ \emph {et~al.}(2018)\citenamefont {von Keyserlingk}, \citenamefont {Rakovszky}, \citenamefont {Pollmann},\ and\ \citenamefont {Sondhi}}]{vonKeyserlingk2018Operator}%
  \BibitemOpen
  \bibfield  {author} {\bibinfo {author} {\bibfnamefont {C.~W.}\ \bibnamefont {von Keyserlingk}}, \bibinfo {author} {\bibfnamefont {T.}~\bibnamefont {Rakovszky}}, \bibinfo {author} {\bibfnamefont {F.}~\bibnamefont {Pollmann}},\ and\ \bibinfo {author} {\bibfnamefont {S.~L.}\ \bibnamefont {Sondhi}},\ }\bibfield  {title} {\bibinfo {title} {Operator hydrodynamics, {OTOCs}, and entanglement growth in systems without conservation laws},\ }\href {https://doi.org/10.1103/PhysRevX.8.021013} {\bibfield  {journal} {\bibinfo  {journal} {Physical Review X}\ }\textbf {\bibinfo {volume} {8}},\ \bibinfo {pages} {021013} (\bibinfo {year} {2018})},\ \Eprint {https://arxiv.org/abs/1705.08910} {arXiv:1705.08910 [cond-mat.str-el]} \BibitemShut {NoStop}%
\bibitem [{\citenamefont {Zou}\ \emph {et~al.}(2025)\citenamefont {Zou}, \citenamefont {Gong},\ and\ \citenamefont {Chen}}]{zou_2025_enhancing}%
  \BibitemOpen
  \bibfield  {author} {\bibinfo {author} {\bibfnamefont {Z.}~\bibnamefont {Zou}}, \bibinfo {author} {\bibfnamefont {J.}~\bibnamefont {Gong}},\ and\ \bibinfo {author} {\bibfnamefont {W.}~\bibnamefont {Chen}},\ }\bibfield  {title} {\bibinfo {title} {Enhancing quantum metrology by quantum resonance dynamics},\ }\href {https://journals.aps.org/prl/abstract/10.1103/lkrt-lvng} {\bibfield  {journal} {\bibinfo  {journal} {Physical Review Letters}\ }\textbf {\bibinfo {volume} {134}} (\bibinfo {year} {2025})}\BibitemShut {NoStop}%
\bibitem [{\citenamefont {Biswas}\ and\ \citenamefont {Choudhury}(2026)}]{biswas_2025_the}%
  \BibitemOpen
  \bibfield  {author} {\bibinfo {author} {\bibfnamefont {H.}~\bibnamefont {Biswas}}\ and\ \bibinfo {author} {\bibfnamefont {S.}~\bibnamefont {Choudhury}},\ }\bibfield  {title} {\bibinfo {title} {Floquet central spin model: A platform to realize time crystals, entanglement steering, and multiparameter metrology},\ }\href {https://doi.org/10.1103/vhx7-gbpt} {\bibfield  {journal} {\bibinfo  {journal} {Phys. Rev. B}\ }\textbf {\bibinfo {volume} {114}},\ \bibinfo {pages} {L140303} (\bibinfo {year} {2026})}\BibitemShut {NoStop}%
\bibitem [{\citenamefont {{Sreeram PG}}\ and\ \citenamefont {Santhanam}(2026)}]{SreeramSanthanam2026}%
  \BibitemOpen
  \bibfield  {author} {\bibinfo {author} {\bibnamefont {{Sreeram PG}}}\ and\ \bibinfo {author} {\bibfnamefont {M.~S.}\ \bibnamefont {Santhanam}},\ }\href {https://doi.org/10.48550/arXiv.2606.01953} {\bibinfo {title} {Quantum resonance based encryption protocol with quantum kicked top}} (\bibinfo {year} {2026}),\ \Eprint {https://arxiv.org/abs/2606.01953} {arXiv:2606.01953 [quant-ph]} \BibitemShut {NoStop}%
\bibitem [{\citenamefont {Romero}\ \emph {et~al.}(2026)\citenamefont {Romero}, \citenamefont {Chen},\ and\ \citenamefont {Ban}}]{Romero2026Battery}%
  \BibitemOpen
  \bibfield  {author} {\bibinfo {author} {\bibfnamefont {S.~V.}\ \bibnamefont {Romero}}, \bibinfo {author} {\bibfnamefont {X.}~\bibnamefont {Chen}},\ and\ \bibinfo {author} {\bibfnamefont {Y.}~\bibnamefont {Ban}},\ }\bibfield  {title} {\bibinfo {title} {Kicked-ising quantum battery},\ }\href {https://doi.org/10.1103/j35s-xv5k} {\bibfield  {journal} {\bibinfo  {journal} {Physical Review Letters}\ }\textbf {\bibinfo {volume} {137}},\ \bibinfo {pages} {050404} (\bibinfo {year} {2026})}\BibitemShut {NoStop}%
\bibitem [{\citenamefont {Haake}\ \emph {et~al.}(1987)\citenamefont {Haake}, \citenamefont {Kuś},\ and\ \citenamefont {Scharf}}]{haake_1987}%
  \BibitemOpen
  \bibfield  {author} {\bibinfo {author} {\bibfnamefont {F.}~\bibnamefont {Haake}}, \bibinfo {author} {\bibfnamefont {M.}~\bibnamefont {Kuś}},\ and\ \bibinfo {author} {\bibfnamefont {R.}~\bibnamefont {Scharf}},\ }\bibfield  {title} {\bibinfo {title} {Classical and quantum chaos for a kicked top},\ }\href {https://api.semanticscholar.org/CorpusID:123871040} {\bibfield  {journal} {\bibinfo  {journal} {Zeitschrift f{\"u}r Physik B Condensed Matter}\ }\textbf {\bibinfo {volume} {65}},\ \bibinfo {pages} {381} (\bibinfo {year} {1987})}\BibitemShut {NoStop}%
\bibitem [{\citenamefont {Ho}\ and\ \citenamefont {Choi}(2022)}]{ho2022exact}%
  \BibitemOpen
  \bibfield  {author} {\bibinfo {author} {\bibfnamefont {W.~W.}\ \bibnamefont {Ho}}\ and\ \bibinfo {author} {\bibfnamefont {S.}~\bibnamefont {Choi}},\ }\bibfield  {title} {\bibinfo {title} {Exact emergent quantum state designs from quantum chaotic dynamics},\ }\href {https://doi.org/10.1103/PhysRevLett.128.060601} {\bibfield  {journal} {\bibinfo  {journal} {Physical Review Letters}\ }\textbf {\bibinfo {volume} {128}},\ \bibinfo {pages} {060601} (\bibinfo {year} {2022})}\BibitemShut {NoStop}%
\bibitem [{\citenamefont {Prosen}(1998)}]{Prosen1998Integrable}%
  \BibitemOpen
  \bibfield  {author} {\bibinfo {author} {\bibfnamefont {T.}~\bibnamefont {Prosen}},\ }\bibfield  {title} {\bibinfo {title} {A new class of completely integrable quantum spin chains},\ }\href {https://doi.org/10.1088/0305-4470/31/21/002} {\bibfield  {journal} {\bibinfo  {journal} {Journal of Physics A: Mathematical and General}\ }\textbf {\bibinfo {volume} {31}},\ \bibinfo {pages} {L397} (\bibinfo {year} {1998})}\BibitemShut {NoStop}%
\bibitem [{\citenamefont {Prosen}(2000)}]{Prosen2000KickedIsing}%
  \BibitemOpen
  \bibfield  {author} {\bibinfo {author} {\bibfnamefont {T.}~\bibnamefont {Prosen}},\ }\bibfield  {title} {\bibinfo {title} {Exact time-correlation functions of quantum ising chain in a kicking transversal magnetic field: Spectral analysis of the adjoint propagator in heisenberg picture},\ }\href {https://doi.org/10.1143/PTPS.139.191} {\bibfield  {journal} {\bibinfo  {journal} {Progress of Theoretical Physics Supplement}\ }\textbf {\bibinfo {volume} {139}},\ \bibinfo {pages} {191} (\bibinfo {year} {2000})}\BibitemShut {NoStop}%
\bibitem [{\citenamefont {Bertini}\ \emph {et~al.}(2019)\citenamefont {Bertini}, \citenamefont {Kos},\ and\ \citenamefont {Prosen}}]{bertini2019entanglement}%
  \BibitemOpen
  \bibfield  {author} {\bibinfo {author} {\bibfnamefont {B.}~\bibnamefont {Bertini}}, \bibinfo {author} {\bibfnamefont {P.}~\bibnamefont {Kos}},\ and\ \bibinfo {author} {\bibfnamefont {T.}~\bibnamefont {Prosen}},\ }\bibfield  {title} {\bibinfo {title} {Entanglement spreading in a minimal model of maximal many-body quantum chaos},\ }\href {https://doi.org/10.1103/PhysRevX.9.021033} {\bibfield  {journal} {\bibinfo  {journal} {Physical Review X}\ }\textbf {\bibinfo {volume} {9}},\ \bibinfo {pages} {021033} (\bibinfo {year} {2019})}\BibitemShut {NoStop}%
\bibitem [{\citenamefont {Akila}\ \emph {et~al.}(2016)\citenamefont {Akila}, \citenamefont {Waltner}, \citenamefont {Gutkin},\ and\ \citenamefont {Guhr}}]{akila2016particle}%
  \BibitemOpen
  \bibfield  {author} {\bibinfo {author} {\bibfnamefont {M.}~\bibnamefont {Akila}}, \bibinfo {author} {\bibfnamefont {D.}~\bibnamefont {Waltner}}, \bibinfo {author} {\bibfnamefont {B.}~\bibnamefont {Gutkin}},\ and\ \bibinfo {author} {\bibfnamefont {T.}~\bibnamefont {Guhr}},\ }\bibfield  {title} {\bibinfo {title} {Particle-time duality in the kicked {Ising} spin chain},\ }\href {https://doi.org/10.1088/1751-8113/49/37/375101} {\bibfield  {journal} {\bibinfo  {journal} {Journal of Physics A: Mathematical and Theoretical}\ }\textbf {\bibinfo {volume} {49}},\ \bibinfo {pages} {375101} (\bibinfo {year} {2016})}\BibitemShut {NoStop}%
\bibitem [{\citenamefont {Bertini}\ \emph {et~al.}(2018)\citenamefont {Bertini}, \citenamefont {Kos},\ and\ \citenamefont {Prosen}}]{bertini2018exact}%
  \BibitemOpen
  \bibfield  {author} {\bibinfo {author} {\bibfnamefont {B.}~\bibnamefont {Bertini}}, \bibinfo {author} {\bibfnamefont {P.}~\bibnamefont {Kos}},\ and\ \bibinfo {author} {\bibfnamefont {T.}~\bibnamefont {Prosen}},\ }\bibfield  {title} {\bibinfo {title} {Exact spectral form factor in a minimal model of many-body quantum chaos},\ }\href {https://doi.org/10.1103/PhysRevLett.121.264101} {\bibfield  {journal} {\bibinfo  {journal} {Physical Review Letters}\ }\textbf {\bibinfo {volume} {121}},\ \bibinfo {pages} {264101} (\bibinfo {year} {2018})}\BibitemShut {NoStop}%
\bibitem [{\citenamefont {Gopalakrishnan}\ and\ \citenamefont {Lamacraft}(2019)}]{gopalakrishnan2019unitary}%
  \BibitemOpen
  \bibfield  {author} {\bibinfo {author} {\bibfnamefont {S.}~\bibnamefont {Gopalakrishnan}}\ and\ \bibinfo {author} {\bibfnamefont {A.}~\bibnamefont {Lamacraft}},\ }\bibfield  {title} {\bibinfo {title} {Unitary circuits of finite depth and infinite width from quantum channels},\ }\href {https://doi.org/10.1103/PhysRevB.100.064309} {\bibfield  {journal} {\bibinfo  {journal} {Physical Review B}\ }\textbf {\bibinfo {volume} {100}},\ \bibinfo {pages} {064309} (\bibinfo {year} {2019})}\BibitemShut {NoStop}%
\bibitem [{\citenamefont {Kim}\ \emph {et~al.}(2023{\natexlab{b}})\citenamefont {Kim}, \citenamefont {Eddins}, \citenamefont {Anand}, \citenamefont {Wei}, \citenamefont {van~den Berg}, \citenamefont {Rosenblatt}, \citenamefont {Nayfeh}, \citenamefont {Wu}, \citenamefont {Zaletel}, \citenamefont {Temme},\ and\ \citenamefont {Kandala}}]{kim2023evidence}%
  \BibitemOpen
  \bibfield  {author} {\bibinfo {author} {\bibfnamefont {Y.}~\bibnamefont {Kim}}, \bibinfo {author} {\bibfnamefont {A.}~\bibnamefont {Eddins}}, \bibinfo {author} {\bibfnamefont {S.}~\bibnamefont {Anand}}, \bibinfo {author} {\bibfnamefont {K.~X.}\ \bibnamefont {Wei}}, \bibinfo {author} {\bibfnamefont {E.}~\bibnamefont {van~den Berg}}, \bibinfo {author} {\bibfnamefont {S.}~\bibnamefont {Rosenblatt}}, \bibinfo {author} {\bibfnamefont {H.}~\bibnamefont {Nayfeh}}, \bibinfo {author} {\bibfnamefont {Y.}~\bibnamefont {Wu}}, \bibinfo {author} {\bibfnamefont {M.}~\bibnamefont {Zaletel}}, \bibinfo {author} {\bibfnamefont {K.}~\bibnamefont {Temme}},\ and\ \bibinfo {author} {\bibfnamefont {A.}~\bibnamefont {Kandala}},\ }\bibfield  {title} {\bibinfo {title} {Evidence for the utility of quantum computing before fault tolerance},\ }\href {https://doi.org/10.1038/s41586-023-06096-3} {\bibfield  {journal} {\bibinfo  {journal} {Nature}\ }\textbf {\bibinfo {volume} {618}},\ \bibinfo {pages} {500} (\bibinfo {year}
  {2023}{\natexlab{b}})}\BibitemShut {NoStop}%
\bibitem [{\citenamefont {Begu{\v{s}}i{\'c}}\ \emph {et~al.}(2024)\citenamefont {Begu{\v{s}}i{\'c}}, \citenamefont {Gray},\ and\ \citenamefont {Chan}}]{begusic2024fast}%
  \BibitemOpen
  \bibfield  {author} {\bibinfo {author} {\bibfnamefont {T.}~\bibnamefont {Begu{\v{s}}i{\'c}}}, \bibinfo {author} {\bibfnamefont {J.}~\bibnamefont {Gray}},\ and\ \bibinfo {author} {\bibfnamefont {G.~K.-L.}\ \bibnamefont {Chan}},\ }\bibfield  {title} {\bibinfo {title} {Fast and converged classical simulations of evidence for the utility of quantum computing before fault tolerance},\ }\href {https://doi.org/10.1126/sciadv.adk4321} {\bibfield  {journal} {\bibinfo  {journal} {Science Advances}\ }\textbf {\bibinfo {volume} {10}},\ \bibinfo {pages} {eadk4321} (\bibinfo {year} {2024})}\BibitemShut {NoStop}%
\bibitem [{\citenamefont {Tirrito}\ \emph {et~al.}(2025)\citenamefont {Tirrito}, \citenamefont {Turkeshi},\ and\ \citenamefont {Sierant}}]{tirrito2024anticoncentration}%
  \BibitemOpen
  \bibfield  {author} {\bibinfo {author} {\bibfnamefont {E.}~\bibnamefont {Tirrito}}, \bibinfo {author} {\bibfnamefont {X.}~\bibnamefont {Turkeshi}},\ and\ \bibinfo {author} {\bibfnamefont {P.}~\bibnamefont {Sierant}},\ }\bibfield  {title} {\bibinfo {title} {Anticoncentration and nonstabilizerness spreading under ergodic quantum dynamics},\ }\href {https://doi.org/10.1103/1jzy-sk9r} {\bibfield  {journal} {\bibinfo  {journal} {Physical Review Letters}\ }\textbf {\bibinfo {volume} {135}},\ \bibinfo {pages} {220401} (\bibinfo {year} {2025})},\ \Eprint {https://arxiv.org/abs/2412.10229} {arXiv:2412.10229 [quant-ph]} \BibitemShut {NoStop}%
\bibitem [{\citenamefont {Jha}\ and\ \citenamefont {Gadge}(2026)}]{jha2026magic}%
  \BibitemOpen
  \bibfield  {author} {\bibinfo {author} {\bibfnamefont {R.}~\bibnamefont {Jha}}\ and\ \bibinfo {author} {\bibfnamefont {K.}~\bibnamefont {Gadge}},\ }\href {https://doi.org/10.48550/arXiv.2607.14222} {\bibinfo {title} {Magic without entanglement: Exact revivals and their fisher information origin}} (\bibinfo {year} {2026}),\ \Eprint {https://arxiv.org/abs/2607.14222} {arXiv:2607.14222 [quant-ph]} \BibitemShut {NoStop}%
\bibitem [{\citenamefont {Axler}(2024)}]{Axler2024LinearAlgebra}%
  \BibitemOpen
  \bibfield  {author} {\bibinfo {author} {\bibfnamefont {S.}~\bibnamefont {Axler}},\ }\href {https://doi.org/10.1007/978-3-031-41026-0} {\emph {\bibinfo {title} {Linear Algebra Done Right}}},\ \bibinfo {edition} {4th}\ ed.,\ Undergraduate Texts in Mathematics\ (\bibinfo  {publisher} {Springer},\ \bibinfo {year} {2024})\BibitemShut {NoStop}%
\bibitem [{\citenamefont {Flack}\ \emph {et~al.}(2020)\citenamefont {Flack}, \citenamefont {Bertini},\ and\ \citenamefont {Prosen}}]{flack2020statistics}%
  \BibitemOpen
  \bibfield  {author} {\bibinfo {author} {\bibfnamefont {A.}~\bibnamefont {Flack}}, \bibinfo {author} {\bibfnamefont {B.}~\bibnamefont {Bertini}},\ and\ \bibinfo {author} {\bibfnamefont {T.}~\bibnamefont {Prosen}},\ }\bibfield  {title} {\bibinfo {title} {Statistics of the spectral form factor in the self-dual kicked {Ising} model},\ }\href {https://doi.org/10.1103/PhysRevResearch.2.043403} {\bibfield  {journal} {\bibinfo  {journal} {Physical Review Research}\ }\textbf {\bibinfo {volume} {2}},\ \bibinfo {pages} {043403} (\bibinfo {year} {2020})}\BibitemShut {NoStop}%
\bibitem [{\citenamefont {Braun}\ \emph {et~al.}(2020)\citenamefont {Braun}, \citenamefont {Waltner}, \citenamefont {Akila}, \citenamefont {Gutkin},\ and\ \citenamefont {Guhr}}]{braun2020transition}%
  \BibitemOpen
  \bibfield  {author} {\bibinfo {author} {\bibfnamefont {P.}~\bibnamefont {Braun}}, \bibinfo {author} {\bibfnamefont {D.}~\bibnamefont {Waltner}}, \bibinfo {author} {\bibfnamefont {M.}~\bibnamefont {Akila}}, \bibinfo {author} {\bibfnamefont {B.}~\bibnamefont {Gutkin}},\ and\ \bibinfo {author} {\bibfnamefont {T.}~\bibnamefont {Guhr}},\ }\bibfield  {title} {\bibinfo {title} {Transition from quantum chaos to localization in spin chains},\ }\href {https://doi.org/10.1103/PhysRevE.101.052201} {\bibfield  {journal} {\bibinfo  {journal} {Physical Review E}\ }\textbf {\bibinfo {volume} {101}},\ \bibinfo {pages} {052201} (\bibinfo {year} {2020})}\BibitemShut {NoStop}%
\bibitem [{\citenamefont {Sierant}\ \emph {et~al.}(2023)\citenamefont {Sierant}, \citenamefont {Lewenstein}, \citenamefont {Scardicchio},\ and\ \citenamefont {Zakrzewski}}]{sierant2023stability}%
  \BibitemOpen
  \bibfield  {author} {\bibinfo {author} {\bibfnamefont {P.}~\bibnamefont {Sierant}}, \bibinfo {author} {\bibfnamefont {M.}~\bibnamefont {Lewenstein}}, \bibinfo {author} {\bibfnamefont {A.}~\bibnamefont {Scardicchio}},\ and\ \bibinfo {author} {\bibfnamefont {J.}~\bibnamefont {Zakrzewski}},\ }\bibfield  {title} {\bibinfo {title} {Stability of many-body localization in floquet systems},\ }\href {https://doi.org/10.1103/PhysRevB.107.115132} {\bibfield  {journal} {\bibinfo  {journal} {Physical Review B}\ }\textbf {\bibinfo {volume} {107}},\ \bibinfo {pages} {115132} (\bibinfo {year} {2023})}\BibitemShut {NoStop}%
\bibitem [{\citenamefont {Herrmann}\ \emph {et~al.}(2023)\citenamefont {Herrmann}, \citenamefont {Kieler},\ and\ \citenamefont {B{\"a}cker}}]{herrmann2023characterizing}%
  \BibitemOpen
  \bibfield  {author} {\bibinfo {author} {\bibfnamefont {T.}~\bibnamefont {Herrmann}}, \bibinfo {author} {\bibfnamefont {M.~F.~I.}\ \bibnamefont {Kieler}},\ and\ \bibinfo {author} {\bibfnamefont {A.}~\bibnamefont {B{\"a}cker}},\ }\bibfield  {title} {\bibinfo {title} {Characterizing quantum chaoticity of kicked spin chains},\ }\href {https://doi.org/10.1103/PhysRevE.108.044213} {\bibfield  {journal} {\bibinfo  {journal} {Physical Review E}\ }\textbf {\bibinfo {volume} {108}},\ \bibinfo {pages} {044213} (\bibinfo {year} {2023})}\BibitemShut {NoStop}%
\bibitem [{\citenamefont {Mishra}\ \emph {et~al.}(2015)\citenamefont {Mishra}, \citenamefont {Lakshminarayan},\ and\ \citenamefont {Subrahmanyam}}]{mishra2015protocol}%
  \BibitemOpen
  \bibfield  {author} {\bibinfo {author} {\bibfnamefont {S.~K.}\ \bibnamefont {Mishra}}, \bibinfo {author} {\bibfnamefont {A.}~\bibnamefont {Lakshminarayan}},\ and\ \bibinfo {author} {\bibfnamefont {V.}~\bibnamefont {Subrahmanyam}},\ }\bibfield  {title} {\bibinfo {title} {Protocol using kicked {Ising} dynamics for generating states with maximal multipartite entanglement},\ }\href {https://doi.org/10.1103/PhysRevA.91.022318} {\bibfield  {journal} {\bibinfo  {journal} {Physical Review A}\ }\textbf {\bibinfo {volume} {91}},\ \bibinfo {pages} {022318} (\bibinfo {year} {2015})}\BibitemShut {NoStop}%
\bibitem [{\citenamefont {Gottesman}(1998)}]{Gottesman1998FaultTolerant}%
  \BibitemOpen
  \bibfield  {author} {\bibinfo {author} {\bibfnamefont {D.}~\bibnamefont {Gottesman}},\ }\bibfield  {title} {\bibinfo {title} {Theory of fault-tolerant quantum computation},\ }\href {https://doi.org/10.1103/PhysRevA.57.127} {\bibfield  {journal} {\bibinfo  {journal} {Physical Review A}\ }\textbf {\bibinfo {volume} {57}},\ \bibinfo {pages} {127} (\bibinfo {year} {1998})}\BibitemShut {NoStop}%
\bibitem [{\citenamefont {Leone}\ \emph {et~al.}(2022)\citenamefont {Leone}, \citenamefont {Oliviero},\ and\ \citenamefont {Hamma}}]{Leone2022Stabilizer}%
  \BibitemOpen
  \bibfield  {author} {\bibinfo {author} {\bibfnamefont {L.}~\bibnamefont {Leone}}, \bibinfo {author} {\bibfnamefont {S.~F.~E.}\ \bibnamefont {Oliviero}},\ and\ \bibinfo {author} {\bibfnamefont {A.}~\bibnamefont {Hamma}},\ }\bibfield  {title} {\bibinfo {title} {Stabilizer r{\'e}nyi entropy},\ }\href {https://doi.org/10.1103/PhysRevLett.128.050402} {\bibfield  {journal} {\bibinfo  {journal} {Physical Review Letters}\ }\textbf {\bibinfo {volume} {128}},\ \bibinfo {pages} {050402} (\bibinfo {year} {2022})}\BibitemShut {NoStop}%
\bibitem [{\citenamefont {Huang}\ \emph {et~al.}(2025)\citenamefont {Huang}, \citenamefont {Li}, \citenamefont {Lee},\ and\ \citenamefont {Zhong}}]{Huang2025XORFWHT}%
  \BibitemOpen
  \bibfield  {author} {\bibinfo {author} {\bibfnamefont {X.}~\bibnamefont {Huang}}, \bibinfo {author} {\bibfnamefont {H.-Z.}\ \bibnamefont {Li}}, \bibinfo {author} {\bibfnamefont {C.~H.}\ \bibnamefont {Lee}},\ and\ \bibinfo {author} {\bibfnamefont {J.-X.}\ \bibnamefont {Zhong}},\ }\href@noop {} {\bibinfo {title} {A fast and exact approach for stabilizer r{\'e}nyi entropy via the {XOR--FWHT} algorithm}} (\bibinfo {year} {2025}),\ \Eprint {https://arxiv.org/abs/2512.24685} {arXiv:2512.24685 [quant-ph]} \BibitemShut {NoStop}%
\bibitem [{\citenamefont {Xiao}\ and\ \citenamefont {Ryu}(2026)}]{Xiao2026FWHT}%
  \BibitemOpen
  \bibfield  {author} {\bibinfo {author} {\bibfnamefont {Z.}~\bibnamefont {Xiao}}\ and\ \bibinfo {author} {\bibfnamefont {S.}~\bibnamefont {Ryu}},\ }\bibfield  {title} {\bibinfo {title} {Exponentially accelerated sampling of pauli strings for nonstabilizerness},\ }\href {https://doi.org/10.1103/v3f2-ldr1} {\bibfield  {journal} {\bibinfo  {journal} {Physical Review Letters}\ }\textbf {\bibinfo {volume} {137}},\ \bibinfo {pages} {070402} (\bibinfo {year} {2026})},\ \Eprint {https://arxiv.org/abs/2601.00761} {2601.00761} \BibitemShut {NoStop}%
\bibitem [{\citenamefont {Sierant}\ \emph {et~al.}(2026)\citenamefont {Sierant}, \citenamefont {Vall{\`e}s-Muns},\ and\ \citenamefont {Garcia-Saez}}]{Sierant2026Magic}%
  \BibitemOpen
  \bibfield  {author} {\bibinfo {author} {\bibfnamefont {P.}~\bibnamefont {Sierant}}, \bibinfo {author} {\bibfnamefont {J.}~\bibnamefont {Vall{\`e}s-Muns}},\ and\ \bibinfo {author} {\bibfnamefont {A.}~\bibnamefont {Garcia-Saez}},\ }\bibfield  {title} {\bibinfo {title} {Computing quantum magic of state vectors},\ }\href {https://doi.org/10.22331/q-2026-04-10-2059} {\bibfield  {journal} {\bibinfo  {journal} {Quantum}\ }\textbf {\bibinfo {volume} {10}},\ \bibinfo {pages} {2059} (\bibinfo {year} {2026})},\ \Eprint {https://arxiv.org/abs/2601.07824} {2601.07824} \BibitemShut {NoStop}%
\bibitem [{\citenamefont {Fischer}\ \emph {et~al.}(2026)\citenamefont {Fischer}, \citenamefont {Leahy}, \citenamefont {Eddins}, \citenamefont {Keenan}, \citenamefont {Ferracin}, \citenamefont {Rossi}, \citenamefont {Kim}, \citenamefont {He}, \citenamefont {Pietracaprina}, \citenamefont {Sokolov}, \citenamefont {Dooley}, \citenamefont {Zimbor{\'a}s}, \citenamefont {Tacchino}, \citenamefont {Maniscalco}, \citenamefont {Goold}, \citenamefont {Garc{\'i}a-P{\'e}rez}, \citenamefont {Tavernelli}, \citenamefont {Kandala},\ and\ \citenamefont {Filippov}}]{Fischer2026Chaos}%
  \BibitemOpen
  \bibfield  {author} {\bibinfo {author} {\bibfnamefont {L.~E.}\ \bibnamefont {Fischer}}, \bibinfo {author} {\bibfnamefont {M.}~\bibnamefont {Leahy}}, \bibinfo {author} {\bibfnamefont {A.}~\bibnamefont {Eddins}}, \bibinfo {author} {\bibfnamefont {N.}~\bibnamefont {Keenan}}, \bibinfo {author} {\bibfnamefont {D.}~\bibnamefont {Ferracin}}, \bibinfo {author} {\bibfnamefont {M.~A.~C.}\ \bibnamefont {Rossi}}, \bibinfo {author} {\bibfnamefont {Y.}~\bibnamefont {Kim}}, \bibinfo {author} {\bibfnamefont {A.}~\bibnamefont {He}}, \bibinfo {author} {\bibfnamefont {F.}~\bibnamefont {Pietracaprina}}, \bibinfo {author} {\bibfnamefont {B.}~\bibnamefont {Sokolov}}, \bibinfo {author} {\bibfnamefont {S.}~\bibnamefont {Dooley}}, \bibinfo {author} {\bibfnamefont {Z.}~\bibnamefont {Zimbor{\'a}s}}, \bibinfo {author} {\bibfnamefont {F.}~\bibnamefont {Tacchino}}, \bibinfo {author} {\bibfnamefont {S.}~\bibnamefont {Maniscalco}}, \bibinfo {author} {\bibfnamefont {J.}~\bibnamefont {Goold}}, \bibinfo {author} {\bibfnamefont
  {G.}~\bibnamefont {Garc{\'i}a-P{\'e}rez}}, \bibinfo {author} {\bibfnamefont {I.}~\bibnamefont {Tavernelli}}, \bibinfo {author} {\bibfnamefont {A.}~\bibnamefont {Kandala}},\ and\ \bibinfo {author} {\bibfnamefont {S.~N.}\ \bibnamefont {Filippov}},\ }\bibfield  {title} {\bibinfo {title} {Dynamical simulations of many-body quantum chaos on a quantum computer},\ }\href {https://doi.org/10.1038/s41567-025-03144-9} {\bibfield  {journal} {\bibinfo  {journal} {Nature Physics}\ }\textbf {\bibinfo {volume} {22}},\ \bibinfo {pages} {302} (\bibinfo {year} {2026})}\BibitemShut {NoStop}%
\bibitem [{\citenamefont {Milne}(2020)}]{milneANT}%
  \BibitemOpen
  \bibfield  {author} {\bibinfo {author} {\bibfnamefont {J.~S.}\ \bibnamefont {Milne}},\ }\href {https://www.jmilne.org/math/CourseNotes/ANT.pdf} {\bibinfo {title} {Algebraic number theory}} (\bibinfo {year} {2020}),\ \bibinfo {note} {version 3.08}\BibitemShut {NoStop}%
\bibitem [{\citenamefont {Washington}(1997)}]{washington1997cyclotomic}%
  \BibitemOpen
  \bibfield  {author} {\bibinfo {author} {\bibfnamefont {L.~C.}\ \bibnamefont {Washington}},\ }\href {https://doi.org/10.1007/978-1-4612-1934-7} {\emph {\bibinfo {title} {Introduction to Cyclotomic Fields}}},\ \bibinfo {edition} {2nd}\ ed.,\ \bibinfo {series} {Graduate Texts in Mathematics}, Vol.~\bibinfo {volume} {83}\ (\bibinfo  {publisher} {Springer},\ \bibinfo {address} {New York},\ \bibinfo {year} {1997})\BibitemShut {NoStop}%
\bibitem [{\citenamefont {Keune}(2023)}]{Keune2023NumberFields}%
  \BibitemOpen
  \bibfield  {author} {\bibinfo {author} {\bibfnamefont {F.}~\bibnamefont {Keune}},\ }\href {https://doi.org/10.54195/IPVU4488} {\emph {\bibinfo {title} {Number Fields}}}\ (\bibinfo  {publisher} {Radboud University Press},\ \bibinfo {year} {2023})\BibitemShut {NoStop}%
\end{thebibliography}%

\appendix

\section{Background on Algebraic Number Theory}
\label{app:algebraic-number-theory-background}

This appendix reviews the standard concepts of algebraic integers, ideals, and ramification used in the main text.

\subsection{Algebraic Integers and Rings of Integers}
\label{app:ring-of-integers}

Let \(K\) be a number field, namely, a finite extension of \(\mathbb{Q}\).

\begin{definition}[Algebraic integer]
An element \(\alpha\in K\) is called an \emph{algebraic integer} if it satisfies a monic polynomial with integer coefficients. Equivalently, there exist \(a_0,\ldots,a_{r-1}\in\mathbb{Z}\) such that
\begin{equation}
    \alpha^r + a_{r-1}\alpha^{r-1} + \cdots + a_1 \alpha + a_0 = 0.
\end{equation}
\end{definition}

\begin{definition}[Ring of integers]
The set of all algebraic integers contained in \(K\) forms a ring, called the \emph{ring of integers} of \(K\). It is denoted by
\begin{equation}
    \mathcal{O}_K:=\left\{\alpha\in K:\alpha\text{ is an algebraic integer}\right\}.
\end{equation}
\end{definition}

\subsection*{Examples}

\begin{example}[The rational field]
For \(K=\mathbb{Q}\), the only rational numbers that satisfy a monic polynomial with integer coefficients are the ordinary integers. Therefore,
\begin{equation}
    \mathcal{O}_{\mathbb{Q}}=\mathbb{Z}.
\end{equation}
\end{example}

\begin{example}[The Gaussian field]
Consider
\begin{equation}
    K=\mathbb{Q}(i), \qquad i^2=-1.
\end{equation}
Since \(i\) satisfies the monic polynomial $x^2+1=0,$ it is an algebraic integer. The ring of integers of
\(\mathbb{Q}(i)\) is the ring of Gaussian integers,
\begin{equation}
    \mathcal{O}_{\mathbb{Q}(i)} =\mathbb{Z}[i]=\left\{a+bi:a,b\in\mathbb{Z}\right\}.
\end{equation}
\end{example}

\begin{example}[Cyclotomic fields]
Let  $ \zeta_n=e^{2\pi i/n}$ be a primitive \(n\)th root of unity, and consider the cyclotomic field $ K=\mathbb{Q}(\zeta_n).$ Because \(\zeta_n\) satisfies the monic polynomial $x^n-1=0,$ it is an algebraic integer. In fact, the complete ring of integers of the cyclotomic field is $ \mathcal{O}_{\mathbb{Q}(\zeta_n)}=\mathbb{Z}[\zeta_n] $  \cite[Theorem~6.4]{milneANT}
Equivalently,
\begin{equation}
    \mathbb{Z}[\zeta_n]=\left\{ \sum_{j=0}^{\varphi(n)-1}a_j\zeta_n^j: a_j\in\mathbb{Z} \right\},
\end{equation}
where \(\varphi(n)\) is Euler's totient function. The Gaussian field is
the special cyclotomic case $\mathbb{Q}(i)=\mathbb{Q}(\zeta_4).$
\end{example}

\subsection{Ideals and Prime Ideals}
\label{app:ideals-prime-ideals}

Throughout this section, \(A\) denotes a commutative ring with identity. In the applications considered in this work, \(A\) will usually be the ring of integers \(\mathcal{O}_K\) of a number field \(K\).

\begin{definition}[Ideal]
A nonempty subset \(I\subseteq A\) is called an \emph{ideal} of \(A\) if
\begin{equation}
    x,y\in I \quad\Longrightarrow\quad  x-y\in I,
\end{equation}
and
\begin{equation}
    a\in A,\quad x\in I \quad\Longrightarrow\quad ax\in I.
\end{equation}
Thus, an ideal is closed under addition and subtraction and absorbs multiplication by arbitrary elements of the ring.
\end{definition}

\begin{definition}[Principal ideal]
For an element \(\alpha\in A\), the \emph{principal ideal} generated by
\(\alpha\) is
\begin{equation}
    (\alpha):= \alpha A = \left\{\alpha a:a\in A \right\}.
\end{equation}
More generally, the ideal generated by
\(\alpha_1,\ldots,\alpha_r\in A\) is
\begin{equation}
    (\alpha_1,\ldots,\alpha_r) := \left\{ a_1\alpha_1+\cdots+a_r\alpha_r: a_1,\ldots,a_r\in A \right\}.
\end{equation}
\end{definition}

\begin{definition}[Prime ideal]
A proper ideal \(\mathfrak p\subsetneq A\) is called a \emph{prime ideal} if
\begin{equation}
    ab\in\mathfrak p  \quad\Longrightarrow\quad a\in\mathfrak p \quad\text{or}\quad b\in\mathfrak p
\end{equation}
for all \(a,b\in A\).
\end{definition}

\subsection*{Examples}

\begin{example}[Ideals in \(\mathbb{Z}\)]
Every ideal of \(\mathbb{Z}\) is principal and has the form
\begin{equation}
    (n) = n\mathbb{Z} = \left\{ nk:k\in\mathbb{Z} \right\}
\end{equation}
for some nonnegative integer \(n\). If \(p\) is a rational prime, then
\begin{equation}
    (p)=p\mathbb{Z}
\end{equation}
is a prime ideal of \(\mathbb{Z}\). For example, \(5\mathbb{Z}\) is a
prime ideal, whereas \(6\mathbb{Z}\) is not, since
\[
2\cdot3\in6\mathbb{Z}, \qquad 2\notin6\mathbb{Z}, \qquad 3\notin6\mathbb{Z}.
\]
\end{example}

\begin{example}[Prime ideals in the Gaussian integers]
For
\[
K=\mathbb{Q}(i), \qquad \mathcal{O}_K=\mathbb{Z}[i],
\]
consider the principal ideal
\begin{equation}
    \mathfrak p = (1+i) = \left\{ (1+i)(a+bi): a,b\in\mathbb{Z} \right\}.
\end{equation}
This is a prime ideal because
\begin{equation}
    \mathbb{Z}[i]/(1+i)
    \cong
    \mathbb{F}_2,
\end{equation}
which is a field. Moreover,
\begin{equation}
    (2)
    =
    (1+i)^2
    =
    \mathfrak p^2,
\end{equation}
where multiplication by a unit does not change a principal ideal. The
contraction of \(\mathfrak p\) to \(\mathbb{Z}\) is
\begin{equation}
    \mathfrak p\cap\mathbb{Z}
    =
    (2).
\end{equation}
Thus, \(\mathfrak p=(1+i)\) is a prime ideal of \(\mathbb{Z}[i]\) lying
above the rational prime \(2\).
\end{example}

\begin{example}[Prime ideals in a cyclotomic field]
Let
\[
K=\mathbb{Q}(\zeta_n),
\qquad
\mathcal{O}_K=\mathbb{Z}[\zeta_n].
\]
For a rational prime \(p\), the ideal generated by \(p\) in
\(\mathcal{O}_K\) is
\begin{equation}
    (p)
    =
    p\mathcal{O}_K.
\end{equation}
Since \(\mathcal{O}_K\) is a Dedekind domain, this ideal has a unique
factorization
\begin{equation}
    p\mathcal{O}_K
    =
    \mathfrak p_1^{\,e_1}
    \cdots
    \mathfrak p_g^{\,e_g}
\end{equation}
into distinct prime ideals of \(\mathcal{O}_K\).

As a concrete example, let \(\ell\) be a rational prime and consider
\[
K=\mathbb{Q}(\zeta_\ell).
\]
The ideal
\begin{equation}
    \mathfrak l
    =
    (1-\zeta_\ell)
\end{equation}
is a prime ideal of \(\mathbb{Z}[\zeta_\ell]\), and
\begin{equation}
    (\ell)
    =
    \mathfrak l^{\,\ell-1}
    =
    (1-\zeta_\ell)^{\ell-1}
    \cite[Theorem~6.4]{milneANT}.
\end{equation}
Its contraction to \(\mathbb{Z}\) is
\begin{equation}
    \mathfrak l\cap\mathbb{Z}
    =
    (\ell).
\end{equation}
Therefore, \(\mathfrak l=(1-\zeta_\ell)\) is the unique prime ideal
lying above \(\ell\) in \(\mathbb{Q}(\zeta_\ell)\).
\end{example}

\subsection{Ramification of Primes}

Let $K/\mathbb Q$ be a number field and $p$ a rational prime. The ideal generated by $p$ factors in $\mathcal O_K$ as
\begin{equation}
p\mathcal O_K=\prod_{j=1}^{g}\mathfrak p_j^{e_j},
\end{equation}
where the $\mathfrak p_j$ are distinct prime ideals. The exponents $e_j$ are the ramification indices; $p$ \emph{ramifies} if some $e_j>1$. In a quadratic extension, a rational prime has exactly three possible behaviors:
\begin{equation}
p\ \longrightarrow\ p\mathcal O_K=
\begin{cases}
\mathfrak p, & \text{inert},\\
\mathfrak p_1\mathfrak p_2,\quad \mathfrak p_1\ne\mathfrak p_2,
    & \text{split},\\
\mathfrak p^{\,2}, & \text{ramified}.
\end{cases}
\end{equation}

For example, let $K=\mathbb Q(i)$, so $\mathcal O_K=\mathbb Z[i]$. The primes $2$, $3$, and $5$ illustrate these three behaviors:
\begin{equation}
(2)=(1+i)^2,\qquad (3)\ \text{is prime},\qquad (5)=(2+i)(2-i).
\end{equation}
The ideal $(1+i)$ occurs twice, so $2$ ramifies. The ideal $(3)$ remains prime, so $3$ is inert. The two ideals $(2+i)$ and $(2-i)$ are distinct, so $5$ splits.

We can see the same behavior from the minimal polynomial  $f(x)=x^2+1$ of $i$ over $\mathbb{Q}$. Its reductions are
\begin{align}
f(x)\bmod 2 &= (x+1)^2,\\
f(x)\bmod 3 &= x^2+1
    \qquad\text{(irreducible over $\mathbb F_3$)},\\
f(x)\bmod 5 &= (x-2)(x+2).
\end{align}
The repeated factor, irreducible polynomial, and two distinct factors correspond to ramification, inertia, and splitting, respectively. This polynomial test gives the ideal factorizations directly here because $\mathcal O_K=\mathbb Z[i]$.
Finally, $\operatorname{Disc}(K)=\operatorname{Disc}(f)=-4$: its only prime divisor is $2$, precisely the prime that ramifies. 

The same notion applies to an extension $L/K$ of number fields, with rational primes replaced by nonzero prime ideals of $\mathcal O_K$. Let $\mathfrak p$ be such a prime ideal. In $\mathcal O_L$, it factors as
\begin{equation}
\label{eq:relative-prime-factorization}
\mathfrak p\mathcal O_L
=
\prod_{j=1}^{r}\mathfrak P_j^{e_j},
\end{equation}
where the $\mathfrak P_j$ are distinct prime ideals of $\mathcal O_L$ lying above $\mathfrak p$, meaning
$\mathfrak P_j\cap\mathcal O_K=\mathfrak p$.
Here, $\mathfrak p\mathcal O_L$ denotes the ideal of $\mathcal O_L$ generated by the elements of $\mathfrak p$.
We say that $\mathfrak p$ \emph{ramifies} in $L/K$ if some $e_j>1$.
Equivalently,
\begin{equation}
\mathfrak p\text{ ramifies in }L/K
\quad\Longleftrightarrow\quad
\mathfrak p\mid\mathfrak d_{L/K},
\end{equation}
where $\mathfrak d_{L/K}$ is the relative discriminant ideal in $\mathcal O_K$.

\section{Ramification in a tower of fields}\label{app:proof for ramification of primes}

\begin{proof}[Proof of Lemma~\ref{lem:ramification of primes}:]
Consider the tower $\mathbb{Q}\subseteq K\subseteq L.$ Choose a prime ideal $\mathfrak{P}\subseteq\mathcal{O}_L$ lying above \(p\) such that $e(\mathfrak{P}/p)>1.$
Define $\mathfrak{p} = \mathfrak{P}\cap\mathcal{O}_K.$

Then \(\mathfrak{p}\) is a prime ideal of \(\mathcal{O}_K\) lying above $p$. Hence, $\mathfrak{p}\cap\mathbb{Z}=(p).$ By the multiplicativity of ramification indices in a tower,
\[
e(\mathfrak{P}/p) = e(\mathfrak{P}/\mathfrak{p})\, e(\mathfrak{p}/p).
\]
Since $e(\mathfrak{P}/p)>1,$ at least one of the following two cases must occur.

\medskip

\noindent\textbf{Case A: $e(\mathfrak{p}/p)>1.$}

In this case, \(p\) is ramified in $K/\mathbb{Q}.$ 

\medskip

\noindent\textbf{Case B: $e(\mathfrak{P}/\mathfrak{p})>1.$}

In this case, the prime ideal \(\mathfrak{p}\) ramifies in $L/K.$
Equivalently, if
\[
\mathfrak{p}\mathcal{O}_L = \mathfrak{P}_1^{e_1} \mathfrak{P}_2^{e_2} \cdots \mathfrak{P}_g^{e_g},
\]
then $e_j>1$ for at least one \(j\).

\medskip

\noindent
\fbox{%
\parbox{0.92\linewidth}{%
If \(p\) ramifies in \(L/\mathbb{Q}\), then either \(p\) ramifies in
\(K/\mathbb{Q}\), or there exists a prime ideal
\(\mathfrak{p}\subseteq\mathcal{O}_K\) lying above \(p\) that ramifies
in \(L/K\).
}}

\end{proof}

\section{Prime Divisors of the Cyclotomic Discriminant}
\label{app:cyclotomic-discriminant}

Let $\zeta_n=e^{2\pi i/n}$
be a primitive \(n\)th root of unity, and let $K_n=\mathbb{Q}(\zeta_n)$ with $n>2.$
The degree of \(K_n\) over \(\mathbb{Q}\) is $[K_n:\mathbb{Q}]=\varphi(n),$ where \(\varphi(n)\) is Euler's totient function.
The discriminant of the cyclotomic field is
\begin{equation}
\label{eq:cyclotomic-discriminant}
    \operatorname{Disc}(K_n/\mathbb{Q}) = (-1)^{\varphi(n)/2} \frac{n^{\varphi(n)}} {\displaystyle \prod_{p\mid n} p^{\varphi(n)/(p-1)} },
\end{equation}
as given in Ref.~\cite[Proposition~2.7]{washington1997cyclotomic}.

Let \(q\) be a rational prime and let \(v_q(n)\) denote the largest nonnegative integer \(a\) such that \(q^a\mid n\). If \(q\mid n\), Eq.~\eqref{eq:cyclotomic-discriminant} gives
\begin{equation}
\label{eq:cyclotomic-discriminant-valuation}
    v_q\!\left( \left| \operatorname{Disc}(K_n/\mathbb{Q}) \right|\right) =\varphi(n)  \left( v_q(n)-\frac{1}{q-1} \right).
\end{equation}

For every odd prime \(q\mid n\), the right-hand side is strictly
positive. Hence,
\begin{equation}
    q\mid n,\quad q\text{ odd} \quad\Longrightarrow\quad q\mid\operatorname{Disc}(K_n/\mathbb{Q}).
\end{equation}
For \(q=2\), Eq.~\eqref{eq:cyclotomic-discriminant-valuation} becomes
\begin{equation}
    v_2\!\left(  \left| \operatorname{Disc}(K_n/\mathbb{Q})  \right| \right) = \varphi(n)\bigl(v_2(n)-1\bigr).
\end{equation}
Therefore,
\begin{equation}
    2\mid\operatorname{Disc}(K_n/\mathbb{Q}) \quad\Longleftrightarrow\quad 4\mid n.
\end{equation}

Consequently, for any rational prime \(q\),
\begin{equation}
\label{eq:prime-divisors-cyclotomic-discriminant}
\boxed{
\begin{aligned}
q\mid\operatorname{Disc}(K_n/\mathbb{Q}) \quad\Longleftrightarrow\quad
&\bigl(q\mid n\text{ and }q\text{ is odd}\bigr)\\
&\text{or }\bigl(q=2\text{ and }4\mid n\bigr).
\end{aligned}
}
\end{equation}

In particular, if $ n\not\equiv2\pmod4,$ then the rational prime divisors of
\(\operatorname{Disc}(\mathbb{Q}(\zeta_n)/\mathbb{Q})\) are precisely the rational prime divisors of \(n\). Since ramified rational primes
are exactly the prime divisors of the field discriminant \cite[Theorem~3.35]{milneANT}, the same statement holds for the ramified primes.

When \(n\equiv2\pmod4\), every odd prime divisor of \(n\) still divides the discriminant, but \(2\) does not. Therefore, in the recurrence
algorithm, the prime \(2\) is included separately.

\section{Proof of the Ramification--Discriminant Lemma}
\label{app:ramification_discriminant_proof}

\begin{proof}[Proof of Lemma~\ref{lem:ramification-discriminant-cleared}:]
Let $f_{\mathrm{sf}}\in K[x]$ be monic and square-free, and let $L$ be its splitting field over $K$. Choose
$c\in\mathbb Z_{>0}$ such that
\begin{equation}
F_{\mathrm{sf}}(x)=c f_{\mathrm{sf}}(x) \in\mathcal O_K[x].
\end{equation}
The leading coefficient of $F_{\mathrm{sf}}$ is $c$ and its splitting field is also $L$.

We use the standard polynomial reduction criterion: if a polynomial in $\mathcal O_K[x]$ has leading coefficient not divisible by a prime ideal $\mathfrak p$, and its reduction modulo $\mathfrak p$ has no repeated roots, then $\mathfrak p$ is unramified in its splitting field \cite[Sec.~7.8, p.~180]{Keune2023NumberFields}. If $p\mid c$, the first alternative holds.

Suppose therefore that $p\nmid c$. Since $\mathfrak p$ lies above $p$, we have $\mathfrak p\cap\mathbb Z=(p)$. Thus $c\notin\mathfrak p$, so the leading coefficient of $F_{\mathrm{sf}}$ remains nonzero modulo $\mathfrak p$. In particular, reduction modulo $\mathfrak p$ preserves its degree.
Write
\begin{equation}
\Delta=\operatorname{Disc}(F_{\mathrm{sf}}) \in\mathcal O_K.
\end{equation}

Suppose that $\mathfrak p$ does not divide $(\Delta)$. Then $\Delta$ remains nonzero modulo $\mathfrak p$. Because the discriminant is a polynomial expression in the coefficients, reduction commutes with taking
the discriminant:
\begin{equation}
\operatorname{Disc}\!\left(\overline{F_{\mathrm{sf}}}\right) = \overline{\operatorname{Disc}(F_{\mathrm{sf}})} = \overline{\Delta} \neq 0.
\end{equation}
Here the bar denotes reduction modulo $\mathfrak p$. Consequently, $\overline{F_{\mathrm{sf}}}$ has no repeated roots in an algebraic closure of
$\mathcal O_K/\mathfrak p$.

The polynomial reduction criterion therefore implies that $\mathfrak p$ is unramified in $L/K$. This contradicts the hypothesis. Hence $\mathfrak p\mid(\Delta).$ Taking ideal norms preserves this divisibility. Since $\mathfrak p$ lies above $p$, its ideal norm is $p^f$ for some integer $f\geq1$. Therefore
\begin{equation}
p^f = \operatorname{Nm}(\mathfrak p) \mid \operatorname{Nm}((\Delta)) = \left|N_{K/\mathbb Q}(\Delta)\right|.
\end{equation}
It follows that
\begin{equation}
p\mid
\left|N_{K/\mathbb Q}
\bigl(\operatorname{Disc}(F_{\mathrm{sf}})\bigr)\right|.
\end{equation}
This proves the second alternative when $p\nmid c$,
and completes the proof.
\end{proof}

\section{Ramification Argument for Finding Prime Divisors of the Period} \label{app:prrof for the theorem of the prime divisors of the recurrence period}

\begin{proof}[Proof of theorem \ref{thm:prime-divisors-of-cyclotomic-order}:] 

Let $U$ be a unitary matrix whose entries lie in a number field $K$, and let $n$ be the smallest positive integer such that $U^n=\tau I$ for some scalar $\tau$. Let $L$ be the splitting field of $f_{\mathrm{sf}}$ over $K$. Using our previous work, \cite{anand2026arithmetic} we establish that $\mathbb Q(\zeta_n)\subseteq L.$ Now let $p$ be an odd prime dividing $n$. By Appendix~\ref{app:cyclotomic-discriminant},
$p$ ramifies in $\mathbb Q(\zeta_n)/\mathbb Q$. Since ramification indices multiply in the tower $\mathbb Q\subseteq\mathbb Q(\zeta_n)\subseteq L,$
the prime $p$ also ramifies in $L/\mathbb Q$.

Applying Lemma~\ref{lem:ramification of primes} to the tower $\mathbb Q\subseteq K\subseteq L$ gives two possibilities.
If $p$ ramifies in $K/\mathbb Q$, then
\begin{equation}
p\mid\operatorname{Disc}(K/\mathbb Q),
\end{equation}
by the field-discriminant criterion, see Appendix \ref{app:cyclotomic-discriminant}.

Otherwise, some prime ideal $\mathfrak p$ of $\mathcal O_K$ above $p$ ramifies in $L/K$. The ramification--discriminant lemma proved in
Appendix~\ref{app:ramification_discriminant_proof}
then gives
\begin{equation}
p\mid c
\qquad\text{or}\qquad
p\mid
\left|N_{K/\mathbb Q}
\bigl(\operatorname{Disc}(F_{\mathrm{sf}})\bigr)\right|.
\end{equation}
This proves the assertion for every odd prime divisor of $n$. Including $2$ separately completes the proof.
\end{proof}

To obtain the possible prime divisors of $n$, define
the positive integer
\begin{equation}
M= 2c\, \left|\operatorname{Disc}(K/\mathbb Q)\right| \left|N_{K/\mathbb Q} \bigl(\operatorname{Disc}(F_{\mathrm{sf}})\bigr)\right|.
\end{equation}
The square-freeness of $f_{\mathrm{sf}}$ ensures that
$\operatorname{Disc}(F_{\mathrm{sf}})\neq0$, so $M\neq0$.
Factor $M$ as
\begin{equation}
M=\prod_{i=1}^{s}p_i^{b_i},
\qquad b_i\geq1.
\end{equation}
The theorem implies that
\begin{equation}
\boxed{
n=\prod_{i=1}^{s}p_i^{a_i},
\qquad a_i\in\mathbb Z_{\geq0}.
}
\end{equation}
Some of the exponents $a_i$ may vanish: the primes $p_i$ form a set containing all prime divisors of $n$, but need not all divide $n$. The exponents $a_i$ are bounded separately using the cyclotomic degree bound.

\section{Upper Bound on the Exponents}\label{app:proof for finding maximum exponent}

The prime divisors of $n$ have been constrained by ramification. We now bound their exponents using the degree of the splitting field. Let
$d=\deg f_{\mathrm{sf}}$. Since $\operatorname{Gal}(L/K)$ acts faithfully on the $d$ distinct roots of $f_{\mathrm{sf}}$, we have $[L:K]\mid d!$.
Together with $\mathbb Q(\zeta_n)\subseteq L$, this gives
\begin{equation}
\label{eq:degree-bound-phi}
\varphi(n) \mid [K:\mathbb{Q}]\,d!.
\end{equation}

Suppose that the prime factorization of $n$ is $n=\prod_i p_i^{a_i}$, where each $a_i\geq 1$. Euler's totient formula
gives
\begin{equation}
\label{eq:euler-phi-factorization}
\varphi(n)=\prod_i p_i^{a_i-1}(p_i-1).
\end{equation}
For each prime divisor $p_i$ of $n$, it follows that
\begin{equation}
\label{eq:valuation-bound}
a_i-1 \leq v_{p_i}\!\left(\varphi(n)\right) \leq v_{p_i}\!\left([K:\mathbb Q]\,d!\right).
\end{equation}
Therefore, the exponent $a_i$ satisfies
\begin{equation}
\label{eq:exponent-upper-bound}
a_i\leq 1+v_{p_i}\!\left([K:\mathbb Q]\,d!\right).
\end{equation}

\begin{example}
Let $n=24=2^3\cdot3$. Then $\varphi(24)=8$, so $v_2(\varphi(24))=3$ and $v_3(\varphi(24))=0$. Consequently, the degree condition requires
$8\mid[K:\mathbb Q]\,d!$.
\end{example}

\section{Proof of the polynomial-remainder criterion}
\label{app:polynomial-remainder-proof}

\begin{proof}[Proof of Lemma~\ref{thm:polynomial-remainder}:]

By the defining property of the minimal polynomial, for every polynomial
\(q(x)\in K[x]\), \[ q(U)=0 \quad\Longleftrightarrow\quad f_{\mathrm{sf}}(x)\mid q(x). \]
Applying this result to $ q(x)=x^k-\tau$ gives
\begin{align}
    U^k=\tau I
    &\quad\Longleftrightarrow\quad
    U^k-\tau I=0
    \nonumber\\
    &\quad\Longleftrightarrow\quad
    f_{\mathrm{sf}}(x)\mid x^k-\tau.
\end{align}

By the polynomial division algorithm, there exists a polynomial
\(a_k(x)\in K[x]\) such that
\[
x^k=a_k(x)f_{\mathrm{sf}}(x)+\rho_k(x), \qquad \deg\rho_k<d_{\mathrm{sf}}. 
\]
The divisibility condition
\[
f_{\mathrm{sf}}(x)\mid x^k-\tau
\]
holds if and only if the remainder of \(x^k-\tau\) upon division by
\(f_{\mathrm{sf}}(x)\) is zero. Since this remainder is $\rho_k(x)-\tau,$
the divisibility condition is equivalent to $\rho_k(x)=\tau.$
Therefore,
\[
U^k=\tau I \quad\Longleftrightarrow\quad \rho_k(x)=\tau,
\]
as claimed.
\end{proof}

\section{Analytical proof of the recurrent non-Clifford family}
\label{app:analytical_recurrence_proof}
Here we provide the analytical proof for exact recurrence in $N-$qubit kicked Ising open chain. We take
\begin{equation}
h=\frac{\pi}{2}, \qquad J=\frac{\pi}{2s}, \qquad s\in\mathbb{Z}_{>0}.
\end{equation}
The transverse-field unitary is
\begin{equation}
U_h = \exp\!\left(-ih\sum_{j=1}^{L}\sigma_y^{(j)}\right).
\end{equation}
Here, $\sigma_\alpha^{(j)}$ denotes the Pauli operator $\sigma_\alpha$, with $\alpha=x,y,z$, acting on site $j$. We use the convention
\begin{equation}
e^{-i\theta\hat{\mathbf n}\cdot\boldsymbol{\sigma}}
= \cos\theta\,I_2 -i\sin\theta\,\hat{\mathbf n}\cdot\boldsymbol{\sigma},
\end{equation}
where $\hat{\mathbf n}$ is a unit vector. At $h=\pi/2$, this becomes
\begin{equation}
U_h=(-i)^N\prod_{j=1}^{N}\sigma_y^{(j)}.
\end{equation}
 Define
\begin{equation}
\label{eq:HZZ_HZ_definitions}
\begin{aligned}
H_{ZZ} &= \sum_{j=1}^{N-1} \sigma_z^{(j)}\sigma_z^{(j+1)},\\
H_Z &= g\sum_{j=1}^{N}\sigma_z^{(j)} +b_1\sigma_z^{(1)} +b_N\sigma_z^{(L)},
\end{aligned}
\end{equation}
so that $H_{\mathrm{Ising}}=JH_{ZZ}+H_Z$. The $H_{ZZ}$ and $H_Z$ label the interaction and
longitudinal-field terms, respectively.
Conjugation by $U_h$ acts on the single-site Pauli operators as
\begin{equation}
\begin{aligned}
U_h\sigma_x^{(j)}U_h^\dagger&=-\sigma_x^{(j)},\quad
U_h\sigma_y^{(j)}U_h^\dagger&=\sigma_y^{(j)}, \quad \\
 \qquad U_h\sigma_z^{(j)}U_h^\dagger&=-\sigma_z^{(j)}.
\end{aligned}
\end{equation}
Consequently,
\begin{equation}
\label{eq:global_Y_properties}
U_h^2=(-1)^N I_{2^N},
\quad
U_hH_{ZZ}U_h^\dagger=H_{ZZ},
\quad
U_hH_ZU_h^\dagger=-H_Z.
\end{equation}
Using the Floquet convention
\begin{equation}
U_F = U_h\exp\!\left[-i\left(JH_{ZZ}+H_Z\right)\right],
\end{equation}
and noting that $[H_{ZZ},H_Z]=0$, we obtain
\begin{equation}
\begin{aligned}
U_F^2 &= U_h e^{-i(JH_{ZZ}+H_Z)} U_h e^{-i(JH_{ZZ}+H_Z)}\\
&= (-1)^N e^{-i(JH_{ZZ}-H_Z)} e^{-i(JH_{ZZ}+H_Z)}\\
&= (-1)^N e^{-i2JH_{ZZ}}.
\end{aligned}
\label{eq:two_step_recurrent_family}
\end{equation}
Thus, the longitudinal and boundary fields cancel exactly after two Floquet periods. Substituting $J=\pi/(2s)$ and raising Eq.~\eqref{eq:two_step_recurrent_family} to the $ s-$th power gives
\begin{equation}
\begin{aligned}
U_F^{2s}
&= (-1)^{Ns}e^{-i\pi H_{ZZ}}\\
&= (-1)^{Ns} \prod_{j=1}^{N-1} e^{-i\pi\sigma_z^{(j)}\sigma_z^{(j+1)}}\\
&= (-1)^{Ns+N-1}I_{2^N}.
\end{aligned}
\label{eq:exact_nonclifford_recurrence}
\end{equation}
This is an exact recurrence. To see that $2s$ is its minimal period for $N\geq2$, first observe that an odd power of $U_F$ contains the global spin flip and cannot be proportional to the identity. For an even power $U_F^{2q}$ to be proportional to the identity, eigenvalues of $H_{ZZ}$ that differ by two must acquire the same phase. Hence $e^{-i4qJ} = e^{-i2\pi q/s} = 1,$
which first holds for $q=s$. Therefore, $n_{\mathrm{rec}}=2s, $ for $ N\geq2.$

\end{document}